\newtheorem{thm}{Theorem}
\newtheorem{prop}{Proposition}
\newtheorem{defn}{Definition}
\newtheorem{rem}{Remark}
\newcommand{\eqz}{\!\!}
\newcolumntype{C}[1]{>{\centering\arraybackslash}p{#1}}
\begin{document}
\title{Efficient Content Delivery in User-Centric and Cache-Enabled Vehicular Edge Networks with Deadline-Constrained Heterogeneous Demands}

\author{\vspace{-0.05in}
Md Ferdous Pervej, \IEEEmembership{Graduate Student Member, IEEE}, Richeng Jin, \IEEEmembership{Member, IEEE}, Shih-Chun Lin, \IEEEmembership{Member, IEEE}, and Huaiyu Dai, \IEEEmembership{Fellow, IEEE}
\thanks{M.F. Pervej, S.-C. Lin and H. Dai are with the Department of Electrical and Computer Engineering, NC State University, Raleigh, NC 27695, USA. Email: \{mpervej, slin23, hdai\}@ncsu.edu.} 
\thanks{R. Jin was with the Department of Electrical and Computer Engineering, NC State University, and is now with the College of Information Science and Electronic Engineering, Zhejiang University, Hangzhou 310058, China. Email: \{rjin2@ncsu.edu; richengjin@zju.edu.cn\}.}
\vspace{-0.35in}
}

\markboth{DRAFT}
{}
\maketitle

\begin{abstract}
Modern connected vehicles (CVs) frequently require diverse types of content for mission-critical decision-making and onboard users' entertainment.
These contents are required to be fully delivered to the requester CVs within stringent deadlines that the existing radio access technology (RAT) solutions may fail to ensure.
Motivated by the above consideration, this paper exploits content caching in vehicular edge networks (VENs) with a software-defined user-centric virtual cell (VC) based RAT solution for delivering the requested contents from a proximity edge server.
Moreover, to capture the heterogeneous demands of the CVs, we introduce a preference-popularity tradeoff in their content request model.
To that end, we formulate a joint optimization problem for content placement, CV scheduling, VC configuration, VC-CV association and radio resource allocation to minimize long-term content delivery delay.
However, the joint problem is highly complex and cannot be solved efficiently in polynomial time. 
As such, we decompose the original problem into a cache placement problem and a content delivery delay minimization problem given the cache placement policy.
We use deep reinforcement learning (DRL) as a learning solution for the first sub-problem. 
Furthermore, we transform the delay minimization problem into a priority-based weighted sum rate (WSR) maximization problem, which is solved leveraging maximum bipartite matching (MWBM) and a simple linear search algorithm.
Our extensive simulation results demonstrate the effectiveness of the proposed method compared to existing baselines in terms of cache hit ratio (CHR), deadline violation and content delivery delay.
\end{abstract}

\vspace{-0.05in}
\begin{IEEEkeywords}
	Connected vehicle (CV), content caching, delay minimization, software-defined networking (SDN), user-centric networking, vehicular edge network (VEN).
\end{IEEEkeywords}

\IEEEpeerreviewmaketitle

\vspace{-0.1in}
\section{Introduction}
\IEEEPARstart{A}{dvanced} driver-assistance systems (ADAS) and infotainment are two premier features for modern connected vehicles (CVs). 
With advanced radio access technologies (RATs), delivering the Society of Automotive Engineers (SAE) level $5$ automation on the road seems more pragmatic day by day. 
Different government organizations - such as the U.S. Department of Transportation's National Highway Traffic Safety Administration in the United States \cite{NHTSA_1}, the Department for Transport in the U.K. \cite{DfT_UK}, etc., set firm regulations for the CVs to ensure public safety on the road.
For swift decision-making to satisfy the safety requirements, the CVs need fast, efficient, and reliable communication and data processing.
As such, an efficient vehicular edge network (VEN) must ensure uninterrupted and ubiquitous wireless connectivity on the road. 
Note that a VEN is an edge network that mainly focuses on communication among vehicles and/or between vehicles and infrastructure \cite{li2021federated}.
To deliver above services, the VEN demands advanced machine learning (ML) tools for resource management complementary to a RAT solution, such as the 5G new-radio (NR) vehicle-to-everything (V2X) communication \cite{liu20206g}.

With increased automation, in-car entertainment is also becoming a priority for modern CVs \cite{8734737}.
Modern CVs are expected to have many new features, such as vehicular sensing, onboard computation, virtual personal assistant, virtual reality, vehicular augmented reality, autopilot, high-definition (HD) map collection, HD content delivery, etc., \cite{8515151, chandupatla2020augmented} that are interconnected for both ADAS and infotainment.
For these demands, by exploiting the emerging content caching \cite{Pervej_UPLEdge}, the centralized core network can remarkably gain by not only ensuring local content distribution but also lessening the core network congestion \cite{8367785, Pervej_AIACEC}.
As such, VENs can reduce end-to-end latency significantly by storing the to-be-requested contents at the network edge \cite{9552606}, which is vital for the CVs' mission-critical delay-sensitive applications.
A practical RAT on top of content caching can, therefore, bring a promising solution for SAE level $5$ automation on the road.
Moreover, owing to these multifarious requirements, it is also critical to explore the efficacy of content caching with limited cache storage and different types of content classes, each class with multiple contents, in the content library.

For diverse applications, such as mobile broad bandwidth and low latency (MBBLL), massive broad bandwidth machine-type (mBBMT), massive low-latency machine-type (mLLMT) communications, etc., the CVs urgently need an efficient RAT solution \cite{9382020}.
In the meantime, regardless of the applications, the VEN must ensure omnipresent connectivity to the CVs and deliver their requested contents timely.
The so-called user-centric networking \cite{9019853, Pervej_throughput, zhou2021user, 9462895} is surging nowadays with its ability to shift network resources towards network edge.
Note that a user-centric approach is based on the idea of serving users by creating virtual cells (VCs) \cite{9134799,8334916,8088603}.
While the network-centric approach serves a user from only one base station, the user-centric approach enables serving a user from a VC that may contain multiple transmission points \cite{9134799,8334916,8088603}.
The latter approach can, thus, not only provide ubiquitous connectivity but also provide higher throughput with minimized end-to-end latency for the end-users \cite{Pervej_EE}.
As such, a user-centric approach can combat the frequent changes in received signal strength - often experienced in VENs due to high mobility, by ensuring multipoint data transmission and receptions.

While the user-centric networking approach can bring universal connectivity and MBBLL/mBBMT/mLLMT solutions for the CVs, it induces a more complex network infrastructure. 
To ensure multipoint data transmission and reception, efficient baseband processing is required.
Moreover, as the traditional hardware-based and closed network-centric approach is inflexible, the user-centric approach demands the use of software-defined networking \cite{6994333}, which can offer more efficient and agile node associations and resource allocations in the user-centric approach.
With proper system design, it is possible to create VCs with multiple low-powered access points (APs) to ensure that the throughput and latency requirements of the CVs are satisfied.
Moreover, amalgamating content caching with the user-centric RAT solution can indeed ensure timely payload delivery for stringent delay-sensitive application requirements of modern CVs.
However, this requires a joint study for - content placement, CV scheduling, VC formulation, VC association with the scheduled CV, and radio resource allocation of the APs in the VCs.

\vspace{-0.15in}
\subsection{Related Work}
\noindent
In literature, there exist several works \cite{huang2021delay,nan2021delay, 8993754, 8998330, 8998397, zhang2020smart, 9129007, 9417383}
that considered cache-enabled VENs from the traditional network-centric approach. 
Huang \textit{et al.} proposed a content caching scheme for the Internet of vehicles (IoVs) in \cite{huang2021delay}. 
They developed a delay-aware content delivery scheme exploiting both vehicle-to-infrastructure (V2I) and vehicle-to-vehicle (V2V) links. 
The authors minimized content delivery delays for the requester vehicles by jointly optimizing cache placement and vehicle associations.
Nan \textit{et al.} also proposed a delay-aware caching technique assuming that the vehicles could either a) decide to wait for better delivery opportunities, or b) get associated with the roadside unit (RSU) that has the content, or c) use one RSU as a relay to extract the content from the cloud in \cite{nan2021delay}.
The authors exploited deep reinforcement learning (DRL) to minimize content delivery cost.
However, these assumptions are not suitable for CVs because time-sensitivity plays a crucial role in the quick operation of CVs.
\cite{8993754} proposed quality-of-service ensured caching solution by bounding the content into smaller chunks.

Dai \textit{et al.} leveraged blockchain and DRL to maximize caching gain \cite{8998330}.
Lu \textit{et al.} proposed a federated learning approach for secure data sharing among the IoVs \cite{8998397}.
However, \cite{8998330, 8998397} assumed that the data rate is perfectly known without any proper resource allocations for the RAT.
Zhang \textit{et al.} addressed proactive caching by predicting user mobility and demands in \cite{zhang2020smart}.
Similar prediction-based modeling has also been extensively studied in \cite{Pervej_UPLEdge,8531745,malik2020personalized}.
Moreover, \cite{zhang2020smart} only analyzed cache hit ratio without incorporating any underlying RAT.
Fang \textit{et al.} considered a static popularity-based cooperative caching solution for roaming vehicles, which assumed constant velocity and downlink data rate and minimized content extraction delay \cite{9129007}.
Liu \textit{et al}. considered coded caching for a typical heterogeneous network with one macro base station (MBS) overlaid on top of several RSUs \cite{9417383}. 
Vehicles trajectory, average residence time within RSU's coverage, and system information were assumed to be perfectly known to the MBS in \cite{9417383}.
Owing to the time-varying channel conditions in VENs, the authors further considered a two-time scaled model.
Particularly, they assumed that content requests only arrive at the large time scale (LTS) slot, whereas MBS could decide to orchestrate resources in each small time scale (STS) slot - within the LTS slot.
However, although \cite{9417383} assumed LTS and STS considering time-varying wireless channels, it did not consider any communication model.
Therefore, the study presented in \cite{9417383} did not reflect delay analysis in VENs.


The study presented in \cite{huang2021delay,nan2021delay, 8993754, 8998330, 8998397, zhang2020smart, 9129007, 9417383} mostly considered that the content catalog consist of a fixed number of contents from a single category.
In reality, each content belongs to a certain category, and the catalog consists of contents from different categories.
Besides, these studies mainly assumed that the users request contents based on popularity.
However, each CV may have a specific need for a particular type of content.
For example, some CVs may need to have frequent operational information, whereas other CVs may purely consume entertainment-related content.
Therefore, a VEN shall consider individual CV's preference, as well as the global popularity.

Some literature also exploited user-centric RAT solutions for VENs \cite{Pervej_throughput,Pervej_EE,8334916,9134799,8088603,9275345}. 
Considering the high mobility of the vehicles, \cite{Pervej_throughput} proposed an approach for user-centric VC creation and optimized resource allocation to ensure maximized network throughput. 
A power-efficient solution for the VC of the VENs was also proposed in \cite{Pervej_EE}.
Lin \textit{et al.} proposed heterogeneous user-centric (HUC) clustering for VENs in \cite{9275345}.
Particularly, the authors considered creating HUC using both traditional APs and vehicular APs.
The goal of \cite{9275345} was to study how HUC migration helps in VEN.
Considering both horizontal handover (HO) and vertical HO, \cite{9275345} studied the tradeoff between throughput and HO overhead. 
Xiao \textit{et al}. showed that dynamic user-centric virtual cells could be used to multicast the same message to a group of vehicles in \cite{9134799}. 
Particularly, \cite{9134799} assumed that a group of vehicles could be considered as a hotspot (HS).
If all vehicles inside the HS are interested in the same multicasted message, multiple APs could formulate a VC to serve the HS.
\cite{9134799} optimized power allocation to balance the signal-to-interference-plus-noise ratio for the vehicles in the HS.
Shahin \textit{et al.} also performed similar studies in \cite{8334916,8088603}.
Instead of serving a single user, they created HS for V2X broadcast groups. They then maximized the total active HS in the network using admission control, transmission weight selection and power control \cite{8334916,8088603}.

\vspace{-0.15in}
\subsection{Motivations and Our Contributions}
\noindent 
As ubiquitous connectivity is essential for CVs, the existing RAT solutions may not be sufficient to meet the strict requirements of CVs for higher automation.
Existing literature shows that VC-based user-centric networking can bring additional burdens that need rigorous studies, such as mobility and HO management \cite{9275345}.
Moreover, as multicasting delivers a common signal, the study presented in \cite{9134799,8334916,8088603} is not suitable for CV-specific independent data requirements in delay-sensitive applications.
However, an alternative software-defined networking approach with advanced ML algorithms can potentially bring the RAT solution \cite{Pervej_throughput, Pervej_EE,lin2021sd}.
Moreover, \cite{Pervej_throughput, Pervej_EE} considered that all APs could serve all users, which may not be possible due to limited coverage and other resource constraints.
Inspired by the user-centric VC-based studies \cite{Pervej_throughput,Pervej_EE,8334916,9134799,8088603,9275345}, our proposed VEN can deploy a close proximity edge server that acts as the software-defined controller.
The to-be-requested contents can be prefetched through the edge servers to ensure local delivery.
Besides, multiple low-powered APs can be placed as RSUs.
The controller can determine the user-centric VC configuration and the corresponding resource orchestration to meet the requirements of the CVs by controlling these APs.

\begin{table}[!t]
\caption{Important Notations Utilized in This Paper}
	\fontsize{7}{7}\selectfont
	\centering
	\begin{tabular}{C{1.8cm}|| C{6.2 cm} }
		\hline 
		\textbf{Parameter} & \textbf{Definition}  \\ \hline
		$\mathcal{B}, B, b$ & Set of APs, total number of APs, $b^{\text{th}}$ AP \\ \hline
		$\mathcal{U}, U, u$ & Set of CVs, total number of CVs, $u^{\text{th}}$ CV \\ \hline
		$W_{\text{max}}$, $W(t)$ & Maximum possible VCs with $B$ APs, total created VCs in slot $t$ \\ \hline
		$\mathcal{B}_{vc}(W(t))$,\! $A_{W(t)}$, $\mathcal{B}_{vc}^{a}(W(t))$ & All VC configurations set, total possible VC configurations in $\mathcal{B}_{vc}(W(t))$, VC sets under $a^{\text{th}}$ configuration \\ \hline
		$VC_a^i$ & $i^{\text{th}}$ VC of $\mathcal{B}_{vc}^a(W(t))$ \\ \hline
		$\mathrm{I}_{b}^{i,a}(t) $ & Indicator function that defines whether AP $b$ is in $VC_a^i$ \\ \hline
		$\mathrm{I}_u(t)$ & Indicator function that defines whether CV $u$ is scheduled at slot $t$ \\ \hline 
		$\mathrm{I}_{u}^{i,a}(t)$ & Indicator function that defines whether VC $VC_a^i$ is selected for user $u$ at time $t$ \\ \hline
		$\bar{Z}$, $Z$, $z$, $\omega$ & Total network bandwidth, total orthogonal pRB, $z^{\text{th}}$ pRB, size of the pRBs \\ \hline 
		$\mathrm{I}^{b,u}_{z}(t)$ & Indicator function that defines whether pRB $z$ is assigned to AP $b$ to serve CV $u$ at time $t$ \\ \hline
		$\psi_b^u(t), \tau_b^u(t)$, $\breve{\mathbf{h}}_{b}^{u,z}(t)$, $\mathbf{h}_{b}^{u,z}(t)$ & Large scale fading, log-Normal shadowing, fast fading channel response, entire channel response, respectively, from AP $b$ to CV $u$ during slot $t$ over the $z^{\text{th}}$ pRB \\ \hline
		$\mathbf{H}_u^b(t)$  & Stacked $\mathbf{h}_b^{u,z}$s over all pRBs for CV $u$ and AP $b$ during slot $t$ \\ \hline 
		$x_b^u(t)$, $\mathbf{s}_b^u(t)$, $\mathbf{w}_b^{u,z}$ & Intended signal, symbol, and beamforming vector of AP $b$ for CV $u$, respectively \\ \hline
		$P_b$ & Transmission power of AP $b$ \\ \hline
		$y_u^z(t)$, $\Gamma_u^z(t)$ & Downlink received signal and downlink SNR at CV $u$ over pRB $z$, respectively, during slot $t$ \\ \hline 
		$\kappa$ & Transmission time interval \\ \hline
		$R_u(t)$ & Downlink achievable rate at CV $u$ during slot $t$ \\ \hline  
		$\mathcal{C}$, $C$, $c$ & Content class set, total content class, $c^{\text{th}}$ content class \\ \hline 
		$\mathcal{F}_c$, $F$, $f_c$, $\mathcal{F}$ & Content set in class $c$, total content in a class, $f^{\text{th}}$ content of class $c$, entire content library \\ \hline 
		$\mathcal{G}_{f_c}$, $G_c$, $g_{f_c}$ & Set of the content features, total number of features, $g_{f_c}^{\text{th}}$ feature, respectively, of content $f$ of class $c$ \\ \hline
		$\Upsilon$, $n$ & DoI, cache (re)-placement or DoI change counter \\ \hline 
		$\Theta_u^t$ & Bernoulli random variable that defines whether CV $u$ places a content request at time $t$ \\ \hline
		$\Psi_t$ & Total content requests from all CVs during time $t$ \\ \hline
		$\mathrm{I}_u^{f_c} (t)$  & Indicator function that defines whether CV $u$ requests content $f_c$ during time $t$ \\ \hline
		$\Omega_{f_c}^{f_c^{'}}$ & Cosine similarity index of content $f_c$ and $f_{c}^{'}$ \\ \hline 
		$S$, $\Lambda$, $\Lambda^c$ & content size, cache storage size of edge server, cache storage to be filled with content from class $c$ \\ \hline 
		$\mathrm{I}_{f_c}(n)$ & Indicator function that defines whether content $f_c$ is stored during cache placement counter $n$ \\ \hline
		$\epsilon_u$ & Highest probability for content exploitation of CV $u$\\ \hline
		$p_c^u$ & CV $u$'s probability of selecting content class $c$ \\ \hline
		$p_c^{f_c}$ & Global popularity of content $f_c$ of class $c$ \\ \hline
		$d_{u,f_c}^{m,t}$, $d_{u,f_c}^{q,t}$, $d_{u,f_c}^{s,t}$  & Content extraction delay from cloud, wait time of $\mathrm{I}_u^{f_c}(t)$ before being scheduled, transmission delay \\ \hline 
		$d_f^{max}$, $\hat{\mathrm{d}}_{u}^{f_c}(t)$ & Maximum allowable delay by the CV, hard-deadline for the edge server to completely offload the requested content \\ \hline 
		$\bar{d}(t)$ & Average delays for all $\mathrm{I}_u^{f_c}(t)$s \\ \hline
		$\eqz \eqz \eqz \eqz m_{ca}, m_{joint}(n)$  & Cache placement action, possible action space\\ \hline
		$\mathbbm{1}_{\mathrm{I}_u^{f_c}}(t)$ & Cache hit event for $\mathrm{I}_u^{f_c}(t)$ \\ \hline
		$h(t)$, CHR$(t)$ & Total cache hit during slot $t$, cache hit ratio during slot $t$\\ \hline
		$\pi_{ca}$ & Cache placement policy of the edge server \\ \hline 
		$\mathcal{T}_{\text{SoI}}^t$ & Slots of interests during slot $t$\\ \hline
		$\mathrm{T}_{u,\text{rem}}^{t-d_f^{max}+\zeta}$, $\mathrm{P}_{u,\text{rem}}^{t-d_f^{max}+\zeta}$ & Remaining time to the deadline and payload for the requested content in slot $t-d_f^{max}+\zeta$ at the current slot $t$ \\ \hline
		$\mathcal{U}_{\text{val}}^t$, $\mathcal{T}_{\text{rem}}^t$, $\mathcal{P}_{\text{rem}}^t$ & Valid CV set during slot $t$, and their minimum remaining deadline and payload sets, respectively \\ \hline  
		$\phi_u(t)$ & Normalized weights of the CVs in valid CV set during slot $t$ \\ \hline 
		$\mathcal{U}_{\text{sch}}^t$ & Scheduled CV set during slot $t$ \\ \hline
		$\breve{R}(t)$ & Weighted sum rate of the VEN during slot $t$ \\ \hline 
		$\mathbf{F}^{top}(n)$ & Top-most popular and their $\Lambda^c$-top similar contents matrix \\ \hline
		$\mathbf{P}_{req}^u (n)$ & Content-specific requests history matrix of $u$ in past DoI \\ \hline
		$\mathbf{P}_{hit}^u (n)$ & Content-specific local cache hit history matrix of $u$ in past DoI \\ \hline
		$\mathbf{P}_{f}(n)$ & Measured popularity of contents during $n$ based on past DoI \\ \hline
		$x_{ca}^{n}$, $r_{ca}^{n}$ & State and instantaneous reward of the edge server during $\pi_{ca}$ learning \\ \hline
		$\pmb{\theta}_{ca}$, $\pmb{\theta}_{ca}^{-}$ & Online DNN and offline DNN of the edge server for learning the CPP \\ \hline
		\eqz $mem_{ca}$, $mem_{ca}^{max}$ & Edge server's memory buffer, maximum length of $mem_{ca}$ for learning $\pi_{ca}$ \\ \hline
		$G$, $\mathbf{R}_t$ & Bipartite graph, weighted data rate matrix \\ \hline
		$e(b,z)$, $R_t[b,z]$ & Edge connecting vertex $b$ and $z$, and corresponding weights \\ \hline
		$R_u^{bit} (t) $ & Possible transmissible bits for CV $u$ during slot $t$ \\ \hline
	\end{tabular}
	\label{System_Parameters}
\end{table}

In comparison to the above studies, in this work, we have considered a practical communication model, introduced a preference-popularity tradeoff in content request models, considered a multi-class content catalog, introduced a new VC formation strategy that exploits all possible ways of partitioning the low-powered APs, introduced a duration of interest (DoI) for which the edge server cannot update the cache storage due to practical hardware and overhead constraints, and devised a joint cache placement and user-centric RAT solution.
Particularly, our contributions are
\begin{itemize}
	\item Considering the stringent requirements of the regulatory organizations, we propose a new software-defined user-centric RAT solution that partitions the low-powered APs to form VC, and provides ubiquitous and reliable connectivity to the CVs on the road.
	\item To ensure fast decision-making for mission-critical operations and uninterrupted onboard entertainment, we exploit content prefetching at the edge server, with multiple classes in the content catalog, while introducing preference-popularity tradeoff into individual content requests owing to the CVs' heterogeneous preferences. 
    Moreover, we introduce a DoI for which the cached contents remain idle due to practical limitations and leverage our proposed RAT solution to deliver the requested contents within a hard deadline.
	\item We introduce a joint content placement, CV scheduling, VC configuration, CV-VC association and radio resource allocation problem to minimize content delivery delays.  
	\item To tackle the grand challenges of the joint optimization problem, we decompose it into a cache placement sub-problem and a delay minimization sub-problem - given the cache placement policy. We propose a novel DRL solution for the first sub-problem. We then transform the second sub-problem to a weighted sum rate (WSR) maximization problem due to practical limitations and solve the transformed problem using maximum weighted bipartite matching (MWBM) and a simple linear search algorithm.
	\item Through analysis and simulation results, we verify that our proposed solution achieves better performance than the existing baselines in terms of cache hit ratio (CHR), deadline violation and content delivery delay.
\end{itemize}

The rest of the paper is organized as follows: Section \ref{Sys_Model} introduces our proposed software-defined user-centric system model.
Section \ref{cacheModelingSection} presents the caching model, followed by the joint problem formulation in Section \ref{I2V_Com_Delay_Model}. 
Problem transformations are detailed in Section \ref{probTransformationSec}, followed by our proposed solution in Section \ref{problemSolutionSec}. 
Section \ref{resultDiscussSection} presents extensive simulation results and discussions. Finally, Section \ref{conclusion} concludes the paper. 
The important notations are listed in Table \ref{System_Parameters}.

\vspace{-0.1in}
\section{Sofware-Defined User-Centric Communication Model}
\label{Sys_Model}

\subsection{Communication System Model}
\label{comSysModel}
\noindent 
This paper considers a software-defined cache-enabled VEN.
An edge server - controlled by a software-defined controller, is placed in proximity to the edge CVs. 
The edge server has dedicated radio resources with limited local cache storage and is connected to the cloud.
Several low-powered APs are deployed as RSUs to provide omnipresent wireless connectivity to the CVs. 
These APs are connected to the edge server with high-speed wired links.
The software-defined centralized controller can control the edge server and perform user scheduling, node associations, precoding, channel estimations, resource allocations, etc. 
In other words, the edge server acts as the baseband unit.
Besides, unlike the legacy system models, we consider a user-centric approach that uses multiple APs to serve the scheduled CVs. 
These APs are used as RSUs that only perform radio transmissions over the traditional Uu interface \cite{3GPP_TR_38_886}.
Denote the vehicle and AP set by $\mathcal{U}=\{u\}_{u=1}^U$ and $\mathcal{B} = \{b\}_{b=1}^B$, respectively. 
The VEN operates in slotted times. 
Given $B$ APs at fixed locations, unlike the traditional network-centric approach, the proposed VEN partitions the AP set $\mathcal{B}$ into $W(t) \leq B$ subsets of APs at each slot $t$.
Without loss of generality, we define each subset as a VC.
The proposed VEN can assign such a VC to a scheduled CV.

Let there be a set $\mathcal{A}(W(t))=\{a\}_{a=1}^{A_{W(t)}}$ that defines the possible ways to partition the $B$ APs into $W(t)$ subsets of APs, i.e., VCs.
Denote the $i^{\text{th}}$ VC of $a \in \mathcal{A}\left(W(t)\right)$ by $VC_a^i \subset \mathcal{B}$ and the set of VCs by $\mathcal{B}_{vc}^a(W(t))=\{VC_a^i\}_{i=1}^{W(t)}$.
Furthermore, for each $a \in \mathcal{A}(W(t))$, the VC must obey the following rules:
\begin{subequations}
\small
	\begin{align}
		VC_a^i &\neq \emptyset, ~ \forall a \text{ and } i, \label{VC_Cons1}\\
		VC_a^i \bigcap VC_a^{i'} &= \emptyset, ~ \forall a \text{ and } i \neq i', \label{VC_Cons2}\\ 
		\bigcup\nolimits_{i=1}^W VC_a^i &= \mathcal{B}, ~\forall a. \label{VC_Cons3}
	\end{align}
\end{subequations}
The first rule in (\ref{VC_Cons1}) means that the $VC_a^i$s must not be empty, while the second rule in (\ref{VC_Cons2}) means the subsets are mutually exclusive.
Besides, the last rule in (\ref{VC_Cons3}) represents that the union of the $VC_a^i$s must yield the original AP set $\mathcal{B}$.
When $W(t)=3$ and $B=6$, for one possible $a \in \mathcal{A}(W(t))$, the VC set $\mathcal{B}_{vc}^a (W(t))$ is shown in Fig. \ref{system_model} by the filled ovals.

Then, for a given $W(t)$, the VEN can formulate the total VC configuration pool $\mathcal{B}_{vc}(W(t)) = \left\{\mathcal{B}_{vc}^a(W(t))\right\}_{a=1}^{A_{W(t)}}$.
Moreover, the VEN can partition the $B$ APs into $W(t)$ VCs following the above rules in $A_{W(t)}=W(t)! B_{vc} (B,W(t))$ ways, where $B_{vc}(B,W(t))$ is calculated as 
\begin{equation}
\label{VC_Stirling_No}
\small
	B_{vc}(B,W(t)) = \frac{1}{W(t)!} \sum\nolimits_{\bar{w}=1}^{W(t)} \left(-1\right)^{\bar{w}} { \binom{W(t)} {\bar{w}} } \left( W(t) - \bar{w} \right)^B\eqz.\eqz
\end{equation}
Note that $B_{vc}(B,W(t))$ is commonly known as the Stirling number of the second kind \cite{graham1989concrete}.

To this end, as the VCs contain different AP configurations, denote the VC and AP mapping by
\begin{align}
\small
	\mathrm{I}_{b}^{i,a} (t) &= \begin{cases}
		$1$, & \text{if AP $b$ is in $VC_a^i$ during slot $t$},\\
		$0$, & \text{otherwise}, 
	\end{cases}. \label{AP_VC_Mapping} 
\end{align} 
The edge server selects the total $W(t)$ number of VCs to form and their configuration $\mathcal{B}_{vc}^a(W(t)) \in \mathcal{B}_{vc}(W(t))$. 
The VCs $VC_a^i \in \mathcal{B}_{vc}^a(W(t))$ are assigned to serve the scheduled CVs.
Denote the CV scheduling and VC-CV association decisions by the following indicator functions:
\begin{align}
\small
	\mathrm{I}_{u} (t) &= \begin{cases}
		$1$, & \text{if CV $u$ is scheduled in slot $t$},\\
		$0$, & \text{otherwise}, 
	\end{cases}. \label{AN_Schedule}  \\
	\mathrm{I}_{u}^{i,a}(t) &= \begin{cases}
		1, & \text{if $VC_a^i$ is selected for CV $u$ in slot $t$},\\
		0, & \text{otherwise},
	\end{cases}. \label{CV_VC_Association}
\end{align}
Note that (\ref{CV_VC_Association}) means that if $\mathrm{I}_{u}^{i,a}(t)=1$, i.e., $VC_a^i$ is assigned to CV $u$, then the CV is connected to all APs inside this $VC_a^i$.

We contemplate that the VEN operates in frequency division duplex (FDD) mode and has a fixed $\bar{\mathrm{Z}}$ Hz bandwidth.
The edge server uses this dedicated $\bar{\mathrm{Z}}$ Hz bandwidth and divides it into $Z$ orthogonal physical resource blocks (pRBs). 
Let the set of the orthogonal pRBs be $\mathcal{Z}=\{z\}_{z=1}^Z$ for the downlink infrastructure-to-vehicle (I2V) communication. 
Denote the size of a pRB by $\omega$, while we introduce the following indicator function for the pRB allocation
\begin{align}
\small
	\mathrm{I}_{z}^{b,u} (t) &= \begin{cases}
		$1$, & \text{if pRB $z$ is assigned to AP $b$ when} \\
		& \qquad \mathrm{I}_{u}^{i,a}(t)=1  \text{ and } \mathrm{I}_{b}^{i,a}(t)=1,\\
		$0$, & \text{otherwise}, 
	\end{cases}. \label{RB_Allocation_V2I}
\end{align}

\begin{figure}[!t]
	\centering
	\includegraphics[trim=20 10 20 10, clip, width=0.5\textwidth, height=0.28\textheight]{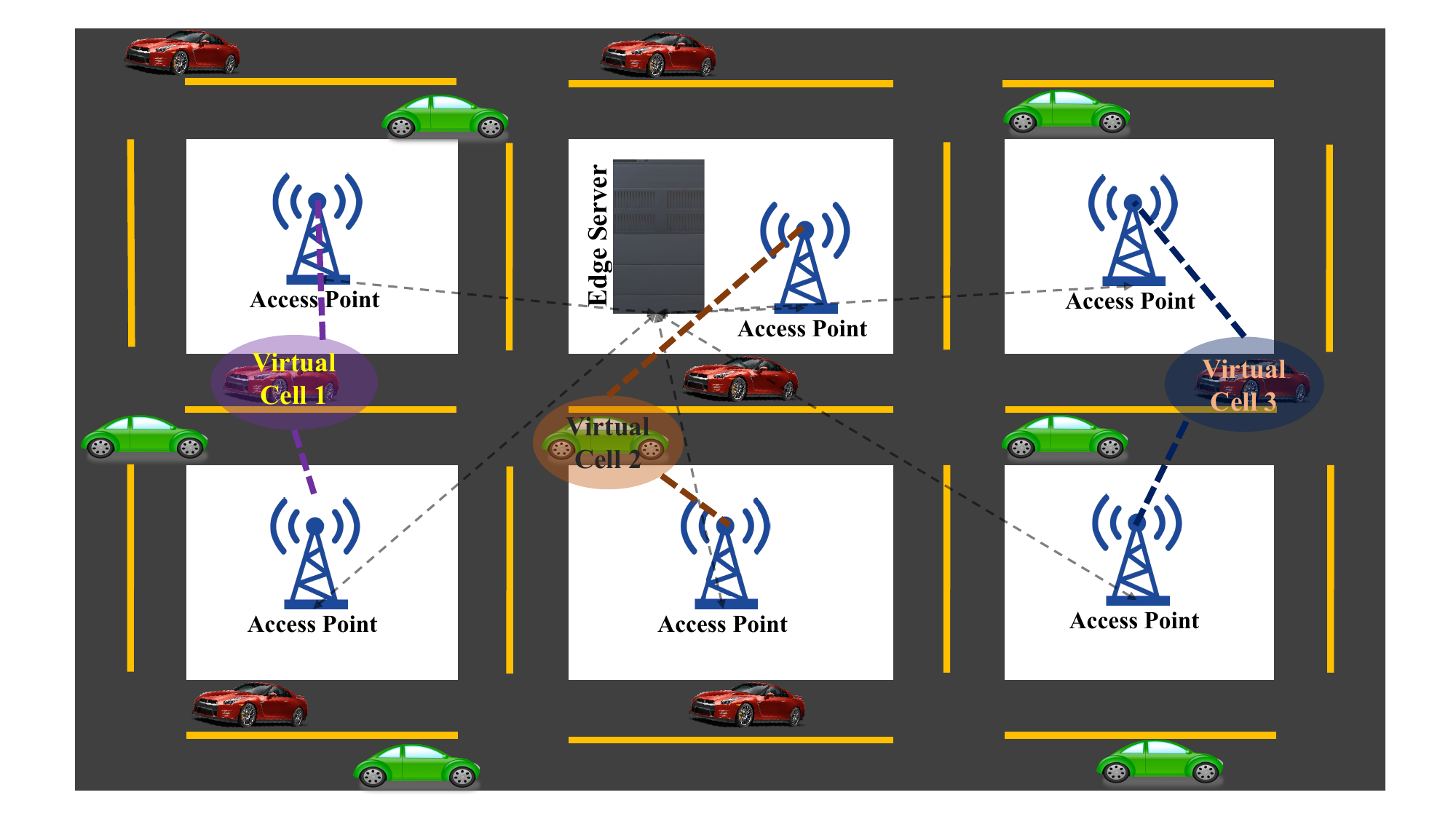}
	\caption{Proposed user-centric cache-enabled vehicular edge network}
	\label{system_model}
\end{figure}

\vspace{-0.15in}
\subsection{Communication Channel Modeling}
\noindent
We consider single antenna CVs whereas, the APs are equipped with $L>1$ antennas. 
Let us denote the channel response at a CV $u$ from the AP $b$, over pRB $z$, as follows:
\begin{equation}
	\begin{aligned}
		\mathbf{h}_{b}^{u, z}(t) &= \sqrt{\psi_b^u(t)} \tau_b^u(t) \breve{\mathbf{h}}_{b}^{u, z} (t) \in \mathbb{C}^{L \times 1},
	\end{aligned}
\end{equation}
where $\sqrt{\psi_b^u(t)}$, $\tau_b^u(t)$ and $\breve{\mathbf{h}}_b^{u,z}(t) = [h_{b,1}^{u,z}(t), \dots, h_{b,L}^{u,z}(t)]^T \in \mathbb{C}^{L \times 1}$ are large scale fading, log-Normal shadowing and fast fading channel responses from the $L$ antennas, respectively.
Besides, $h_{b,l}^{u,z}(t)$ is the $l^{\text{th}}$ row of $\breve{\mathbf{h}}_b^{u,z}(t)$ that denotes the channel between $u$ and the $l^{\text{th}}$ antenna of AP $b$ at time $t$ over pRB $z$.
Note that we consider the urban macro (UMa) model \cite{3GPP_TR_38_901} for modeling the path losses following $3$rd Generation Partnership Project ($3$GPP) standard \cite{3GPP_TR_38_886}.
Then, for all pRBs in the system, we express the wireless channels from AP $b$ to CV $u$, at time $t$, as $ \mathbf{H}_u^b (t) = \left[\mathbf{h}_{b}^{u, 1}(t), \dots, \mathbf{h}_{b}^{u, Z}(t) \right] \in \mathbb{C}^{L \times Z} $.
We consider the edge server has perfect channel state information (CSI)\footnote{Although channel reciprocity does not hold in FDD, the edge server can use some feedback channels to estimate the CSI.} and all transceivers can mitigate the Doppler effect.

\vspace{-0.1in}
\subsection{Transceiver Modeling}
\noindent
The transmitted signal at the AP $b$ for CV $u$ is 
	$\mathbf{s}_b^u(t) = \sqrt{P_{b}} x_{b}^{u} (t) \mathbf{w}_b^{u,z}(t) \in \mathbb{C}^{L \times 1}$,
where $P_b$ is the transmission power of AP $b$.
Besides, $x_b^u(t)$ and $\mathbf{w}_{b}^{u,z} (t) \in \mathbb{C}^{L \times 1}$ are the unit powered intended signal and corresponding precoding vector over pRB $z$, respectively, of the AP $b$ for $u$ during slot $t$. 
Then, the transmitted signals for CV $u$ from all APs is	$\mathbf{s}_u(t) = \left[\mathbf{s}_1^u(t), \dots, \mathbf{s}_B^u(t) \right]^T \in \mathbb{C}^{L \times B}$.
Moreover, as each AP transmits over orthogonal pRBs, the proposed VEN does not have any interference. 
To this end, we calculate the received signal at CV $u$, over pRB $z$, as
\begin{equation}
\label{V2I_Downlink_Rx_Sig}
\begin{aligned}
		&y_{u}^z(t) \eqz = \eqz \eqz \sum_{i=1}^{W(t)} \eqz \mathrm{I}_{u}^{i,a}(t) \! \Big(\! \sum_{b=1}^B \eqz \mathrm{I}_{b}^{i,a}(t) \cdot \mathrm{I}_{z}^{b,u}(t) \eqz \left[{\mathbf{h}_b^{u,z}(t)}^H \mathbf{s}_b^u(t) + \eta_b^u(t) \right] \eqz\Big)\!,\eqz\eqz\eqz \eqz\eqz\eqz
	\end{aligned}
\end{equation}
where $\eta_b^u(t) \sim CN\left(0, \sigma_b^2 \right)$ is zero mean circularly symmetric Gaussian distributed noise with variance $\sigma^2$.
The corresponding signal-to-noise ratio (SNR), over pRB $z$, is 
\begin{equation}
\label{V2I_SINR}
	\Gamma_{u}^z \left(t\right)  =  \frac{\sum_{i=1}^{W(t)} \mathrm{I}_{u}^{i,a}(t)  \big(\!\sum_{b=1}^B \! \mathrm{I}_{b}^{i,a}(t) \cdot \mathrm{I}_{z}^{b,u}(t) \! \big[ P_b \big\vert {\mathbf{h}_{b}^{u,z} (t)}^H \mathbf{w}_b^{u,z}(t) \big\vert^2 \big] \big)} { \omega \big[\sum_{i=1}^{W(t)} \mathrm{I}_{u}^{i,a}(t) \cdot \big(\sum_{b=1}^B \mathrm{I}_{b}^{i,a}(t) \cdot \sigma_b^2 \big) \big] }. 
\end{equation}
Therefore, the total downlink achievable capacity for CV $u$ is \vspace{-0.1 in}
\begin{equation}
	\label{V2I_Data_Rate}
	R_{u}\left(t\right) = \sum\nolimits_{z=1}^Z \omega \cdot \log_2 \left(1 + \Gamma_{u}^z \left(t\right) \right).
\end{equation}

\vspace{-0.05in}
\section{Edge Caching Modeling}
\label{cacheModelingSection}

\subsection{Definitions and Assumptions}
\noindent
To avoid cross-domain nomenclature, we present necessary terms and our assumptions in the following.
\vspace{-0.05in}
\begin{defn}[Content]
	The source file that the CVs request is defined as content.
	These files can contain CVs' operational information, geographic information, map/navigation information, weather conditions, compressed file with sensory data, local news, video/audio clips, etc.
\end{defn}
\vspace{-0.1in}
\begin{defn}[Content Class]
	Each content belongs to a class that defines the type/category of the content.
	Let there be $F$ contents in each class and the set of the content class be $\mathcal{C}=\{c\}_{c=1}^C$, where $C \in \mathbb{Z}^{+}$. 
	Denote the content set of class $c$ by $\mathcal{F}_c=\{f_c\}_{f=1}^F$, where $f_c$ represents the $f^{\text{th}}$ content of class $c$.
	Moreover, let the content size be $S$ bits.
\end{defn}

\vspace{-0.1in}
\begin{defn}[Content Features]
	Let the content in the $c^{\text{th}}$ class have $G_c \in \mathbb{Z}^{+}$ features.
	Denote the feature set of content $f_c$ by $\mathcal{G}_{f_c} = \{g_{f_c}\}_{g=1}^{G_c}$.
	Note that the content features are essentially the descriptive attributes of the content. For example, it can be the genre/type of the content, names of the directors, actors, actresses, geolocation information, timestamp, etc.
\end{defn}
\vspace{-0.1in}
\begin{defn}[Content Library]
	The content library is comprised of all contents from all classes. 
	Let $\mathcal{F} = \bigcup_{c=1}^C \mathcal{F}_c$ be the content library.
\end{defn}

\vspace{-0.1in}
\begin{defn}[Duration of Interest (DoI)]
	The period for which the content library remains fixed is denoted as the duration of interest (DoI).
	Denote this period by $\Upsilon$.
\end{defn}
\vspace{-0.05in}
The content library $\mathcal{F}$ is fixed. 
While the CVs may request contents in each slot $t$, the edge servers can update its cached content only in each $t=n\Upsilon$ time slot, where $n \in \mathbb{Z}^{+}$ defines the cache (re)placement counter. 
Note that this assumption is made as it may not be practical to update the cache storage in each slot $t$ due to hardware limitations. 
Moreover, we assume that the content update/refresh process is independent of the content request arrival process.
To this end, we focus on request arrival modeling from the CVs, followed by modeling their preferences, i.e., their requested $f_c \in \mathcal{F}$ in slot $t$.

\vspace{-0.1in}
\subsection{User Request/Traffic Modeling}
\noindent
We assume that the CVs can make content requests in each slot $t$ following \textit{Bernoulli} distribution\footnote{Similar access modeling was also used in existing works \cite{8374824,8357917}.}.
At slot $t$, let $\Theta_u^t$ denote a \textit{Bernoulli} random variable - with success probability $p_u$, that defines whether $u$ makes a content request or not.
The total number of requests during a DoI $\Upsilon$ for an individual CV $u$ follows \textit{Binomial} distribution.
Besides, at slot $t$, the total number of requests from all CVs, i.e., $\Psi_t = \sum_{u=1}^{U} \Theta_u^t$, follows a Poisson binomial distribution \cite{wang1993number} with probabilities $\mathbf{p}=\{p_u\}_{u \in \mathcal{U}}$.
Moreover, the probability of the distribution of $\Psi_t$ can be bounded using Proposition \ref{Prop_Chernoff_Bound}.

\vspace{-0.05in}
\begin{prop} \label{Prop_Chernoff_Bound}
	Let $\bar{\mu} = \mathbb{E}[\Psi_t] = \sum_{u=1}^U p_u$ and $\bar{p} = (1/U)\sum_{u=1}^U p_u$ be the average success probability. 
	Then, at slot $t$, the probability that the distribution of the total number of requests of the CVs gets larger than some $\xi=\bar{\mu}+\delta$ and $0 < \delta < U-\bar{\mu}$, is bounded above as follows:
	\begin{equation}
		\label{Chernoff_bound}
		\begin{aligned}
			\mathrm{Pr}\left\{\Psi_t \geq \xi \right\} & \leq \exp\left[-UD_{\bar{p}} \left(\chi\right)\right],
		\end{aligned}
	\end{equation}
	where $D_{\bar{p}} \left( \chi \right) = \chi \ln \left(\chi/\bar{p}\right) + (1-\chi)\ln\left((1-\chi)/(1-\bar{p})\right)$ is the relative entropy of $\chi$ to that of $\bar{p}$ and $\chi = \xi/U$. 
\end{prop}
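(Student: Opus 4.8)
The plan is to establish the bound by the exponential Chernoff method, with the crucial twist that the heterogeneous Poisson binomial moment generating function (MGF) will be dominated by a \emph{homogeneous} binomial MGF through an AM--GM argument. First I would invoke Markov's inequality on the exponentiated sum: for any $s>0$,
\begin{equation*}
\mathrm{Pr}\{\Psi_t \geq \xi\} = \mathrm{Pr}\{e^{s\Psi_t} \geq e^{s\xi}\} \leq e^{-s\xi}\, \mathbb{E}\!\left[e^{s\Psi_t}\right].
\end{equation*}
Since the $\Theta_u^t$ are independent across CVs (which is precisely what makes $\Psi_t$ Poisson binomial), the MGF factorizes as $\mathbb{E}[e^{s\Psi_t}] = \prod_{u=1}^U (1 - p_u + p_u e^s)$, each factor being the MGF of a single $\textit{Bernoulli}(p_u)$ variable.

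The key step is to decouple the dependence on the individual $p_u$'s. Applying the AM--GM inequality to the $U$ positive factors gives
\begin{equation*}
\prod\nolimits_{u=1}^U (1 - p_u + p_u e^s) \leq \left(\frac{1}{U}\sum\nolimits_{u=1}^U (1 - p_u + p_u e^s)\right)^{\!U} = \left(1 - \bar{p} + \bar{p}\, e^s\right)^U,
\end{equation*}
where the final equality uses $\bar{p} = \frac{1}{U}\sum_u p_u$. This collapses the Poisson binomial tail into the familiar binomial Chernoff form $\mathrm{Pr}\{\Psi_t \geq \xi\} \leq e^{-s\xi}(1 - \bar{p} + \bar{p} e^s)^U$, valid for every $s>0$.

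It then remains to optimize the free parameter $s$. Differentiating the exponent $-s\xi + U\ln(1 - \bar{p} + \bar{p}e^s)$ and setting the derivative to zero yields the minimizer $e^{s^\star} = \frac{\chi(1-\bar{p})}{\bar{p}(1-\chi)}$ with $\chi = \xi/U$; note that $s^\star > 0$ is guaranteed precisely because $\chi = \bar{p} + \delta/U > \bar{p}$ under the hypothesis $\delta > 0$ (and $\chi < 1$ follows from $\delta < U-\bar{\mu}$). Substituting $s^\star$ back, using the simplification $1 - \bar{p} + \bar{p}e^{s^\star} = \frac{1-\bar{p}}{1-\chi}$, and regrouping the logarithms reduces the exponent to exactly $-U[\chi\ln(\chi/\bar{p}) + (1-\chi)\ln((1-\chi)/(1-\bar{p}))] = -U D_{\bar{p}}(\chi)$, which is the claimed bound.

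I expect the one genuinely non-mechanical insight to be the AM--GM reduction: it is what converts a heterogeneous Poisson binomial tail into a standard binomial one without sacrificing the exponential decay rate, and its validity rests only on each factor $1 - p_u + p_u e^s$ being positive, which is immediate for $s>0$ since $e^s>1$ and $p_u\in[0,1]$. The subsequent optimization over $s$ and the algebraic collapse into the relative-entropy form $D_{\bar{p}}(\chi)$ are routine once the correct $s^\star$ has been identified.
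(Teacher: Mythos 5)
Your proposal is correct and follows essentially the same route as the paper's own proof: Markov's inequality on the exponentiated sum, factorization of the Poisson binomial MGF by independence, the AM--GM step to dominate it by the homogeneous binomial MGF $\left(1-\bar{p}+\bar{p}e^{s}\right)^{U}$, and optimization of the free parameter (your $e^{s^\star}=\frac{\chi(1-\bar{p})}{\bar{p}(1-\chi)}$ is identical to the paper's $e^{\iota}=\frac{\xi(1-\bar{p})}{\bar{p}(U-\xi)}$ since $\chi=\xi/U$). Your additional verification that $s^\star>0$ and $\chi<1$ under the stated hypotheses on $\delta$ is a small but welcome refinement the paper leaves implicit.
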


\begin{proof}
	Please see Appendix \ref{Appendix_A}.
\end{proof}

\subsection{Individual User Preference Modeling}
\label{Content_Request_Modeling} 
\noindent
Given that a CV makes a request, we now focus on the \emph{which} question, i.e., given that $\Theta_u^t=1$, which content shall that CV request at that slot? 
Let us express a particular content $f_{c}$ requested by CV $u$ during slot $t$ by 
\begin{equation}
	\label{Req_Indicator}
	\begin{aligned}
		\mathrm{I}^{f_c}_{u}(t) = \begin{cases}
			1, & \text{if $\Theta_u^t=1$ and $u$ request $f_c$},\\
			0, & \text{otherwise},
		\end{cases}
	\end{aligned}
\end{equation}
Unlike legacy modeling\footnote{Legacy model assumes that content requests follow Zipf distribution \cite{huang2021delay,nan2021delay,9129007}, which does not capture the personal preferences of the users\cite{Pervej_UPLEdge}.}, we consider that a CV's choice depends both on its personal preference and global popularity.
Depending on the operational needs, a CV may prefer to consume a specific content related to its operation. Besides, it may also prefer consuming a content very similar to the one that it previously consumed.
Moreover, it may also get influenced by the popularity of the contents.
For example, at slot $t$, a CV may request operation-related content $f_c$ specific to that particular timestamp. At slot $t+1$, it may then need another operation-related content $f_c^{'}$ that is very similar to $f_c$. 
Similarly, for purely entertainment-related content, the request can get influenced by the user's previous experience. The user may also choose to consume the most popular content at that time.
As such, we model the user's content request as the \emph{exploitation-exploration} tradeoff between personal preference and global popularity of the contents. 
We present this by the $\epsilon_u$-policy, i.e., a CV exploits with probability $\epsilon_u$ and explores with probability $(1-\epsilon_u)$.

\subsubsection{Content Selection during Exploitation}
In this case, the CV exploits its preferred contents from the same class it previously consumed a content. 
Given that CV has requested $f_c$ in slot $t$, it will request the most similar content to $f_c$ in class $c$ with probability $\epsilon_u$ if $\Theta_u^{t+1}=1$.
Note that similarity between $f_c$ and $f_{c}^{'}$, where $f_c \neq f^{'}_c$ is calculated as
\begin{equation}
\label{Cosine_Similarity}
	\begin{aligned}
		\Omega_{f_c}^{f^{'}_c} = \big(\sum\nolimits_{g \in \mathcal{G}_{f_c}} g_{f_c} ~ g_{f^{'}_c} \big) / \Big( \sqrt{\sum\nolimits_{g \in \mathcal{G}_{f_c}} g_{f_c}^2} \sqrt{\sum\nolimits_{g \in \mathcal{G}_{f^{'}_c}} g_{f^{'}_c}^2} \Big).
	\end{aligned}
\end{equation}

\subsubsection{Content Selection during Exploration} 
Given that the CV requested content from class $c$ previously, it will explore new content from a different class $c' \neq c$.
Denote CV $u$'s class selection probability by $p_{c}^u$, which follows a \textit{categorical} distribution.
Once the CV chooses the new content class $c$, it randomly selects a content from this class based on global popularity.
Denote the global popularity of contents in class $c$ by $\mathbf{p}_c^f=\{p_c^{f_c}\}_{f_c\in \mathcal{F}_c}$, where $p_c^{f_c}$ is the popularity of content $f_c$.

Note that our design boils down to a solely popularity-based model \cite{9552606} when there is only a single content class and $\epsilon_u=0$.

\vspace{-0.1in}
\subsection{Content (Re)placement in the Cache}
\noindent
Recall that only the edge server has limited cache storage in the proposed VEN.
Let the cache storage of the edge server be $\Lambda$.
Denote the cache placement indicator by the binary indicator function $\mathrm{I}_{f_c}(n)$ during cache placement counter $n$.
The edge server obeys the following rules for cache placement:
\begin{minipage}{0.25\textwidth}
\begin{equation}
    S \cdot \sum\nolimits_{c=1}^C \sum\nolimits_{f=1}^F  \mathrm{I}_{f_c}(n) \leq \Lambda, \eqz \label{cacheConst1} 
\end{equation}    
\end{minipage} \hspace{0.01in}
\begin{minipage}{0.22\textwidth}
    \begin{equation}
    S \cdot \sum\nolimits_{f=1}^F \mathrm{I}_{f_c}(n) = \Lambda^c \label{cacheConst2}\eqz, \eqz 
\end{equation}    
\end{minipage}
where $\Lambda^c$ is the total storage taken by the cached content from class $c \in \mathcal{C}$.
Moreover, (\ref{cacheConst1}) ensures that the size of the total cached contents must not be larger than the storage capacity.
At each $t = n \Upsilon$, the software-defined controller pushes the updated contents into the cache storage. 
These contents remain at the edge servers' cache storage till the next DoI update.

\vspace{-0.1in}
\section{Delay Modeling and Problem Formulation}
\label{I2V_Com_Delay_Model}

\vspace{-0.05in}
\subsection{Content Delivery Delay Modeling}	
\vspace{-0.05in}
\noindent 
For higher automation (and uninterrupted entertainment of the onboard users), the CVs may need to continuously access diverse contents within a tolerable delay to avoid fatalities (and quality of experiences (QoEs)). 
This motivates us to introduce a hard deadline requirement for the edge server to deliver the requested contents. 
We consider that content requests arrive continuously following $\Theta_u^t$s.
Each CV can make at most a single content request based on its preference if $\Theta_u^t=1$. 
The requester has an associated hard deadline requirement $d_{f}^{max}$, within which it needs the entire payload.
The edge server, on the other hand, can have a shorter deadline, denoted by $\hat{\mathrm{d}}_{u}^{f_c} (t)$, associated with the requests as it replaces the content at the end of each DoI. 
Formally, the deadline for the edge server to fully offload a requested content is  
\begin{equation}
\label{deadline}
    \hat{\mathrm{d}}_{u}^{f_c} (t) = \text{min} \{d_{f}^{max}, (n+1) \Upsilon - t\}, \quad \forall t \in [n\Upsilon, (n+1)\Upsilon].
\end{equation}
Essentially, (\ref{deadline}) ensures that the edge server cannot exceed the minimum of the maximum allowable delay threshold $d_f^{max}$ and remaining time till the next cache replacement slot $(n+1)\Upsilon$.

To that end, we calculate the associated delay of delivering the requested contents from all CVs $u \in \mathcal{U}$. 
This delay depends on whether the requested content has been prefetched and the underlying wireless communication infrastructure.
Particularly, for the cache miss event\footnote{When the edge server needs to serve content request $\mathrm{I}^{f_c}_u(t)$ but it does not have content $f_c$ in its local cache storage is known as the cache miss.}, the requested content is extracted from the cloud. 
This extraction causes an additional delay and involves the upper layers of the network\footnote{We assume that each miss event needs to be handled by the upper layers. 
}. 
Denote the delay for extracting content $f_c$, requested by CV $u$ during slot $t$, from the cloud by $d_{u,f_c}^{m,t}$. 
Moreover, we consider two more additional delays. 
The first one is the wait time of a request before being scheduled for transmission by the edge server, given that the edge server has either prefetched the content during the cache placement slot or the upper layers have already processed the requested content from the cloud.
The second one is the transmission delay. 
Denote these two delays, i.e., wait time and transmission delays, for $\mathrm{I}_u^{f_c}(t)$ by $d_{u,f_c}^{q,t}$ and $d_{u,f_c}^{s,t}$, respectively.
Therefore, the total delay of delivering the entire content is calculated as
\begin{equation}
\label{totalDelay}
    d^{f_c}_{u} \left(t\right) = \left[1 - \mathrm{I}_{f_c}(n)\right] d^{m, t}_{u, f_c} + d^{q,t}_{u, f_c} + d^{s, t}_{u, f_c}.
\end{equation}
Thus, the average content delivery delay for all CVs is 
\begin{equation}
\label{Average_Delay}
    \bar{\mathrm{d}}(t) = (1/U) \sum\nolimits_{u=1}^U \sum\nolimits_{c=1}^C \sum\nolimits_{f=1}^{F} \mathrm{I}_u^{f_c}(t) \cdot d_u^{f_c} (t).
\end{equation}

\vspace{-0.15in}
\subsection{Joint Problem Formulation}
\label{OriginalProblemFormulation}
\noindent
We aim to find joint cache placement $\mathrm{I}_{f_c}(n)$, user scheduling $\mathrm{I}_u(t)$, total $W(t)$ VC to form, the VC configuration $\mathcal{B}_{vc}^a(W(t))$, VC association $\mathrm{I}_{u}^{i,a}(t)$ and radio resource allocation for the serving APs in the selected VCs, i.e., $\mathrm{I}_{z}^{b,u}(t)$s to minimize long-term expected average content delivery delay for the CVs.
As such, we pose our joint optimization problem as
\begin{subequations}
\label{Original_Problem}
\begin{align}
	\tag{\ref{Original_Problem}} 
	\eqz \eqz & \underset{\mathrm{I}_{f_c} (n), \mathrm{I}_u(t), W(t), \mathcal{B}_{vc}^a\left(W(t)\right), \mathrm{I}_{u}^{i,a}(t), \mathrm{I}_{z}^{b,u}(t)} {\text{minimize}} \eqz \eqz \eqz \mathrm{d} = \underset{T \rightarrow \infty}{\limsup}~\mathbb{E} \left[ \frac{1}{T} \sum_{t=1}^T \bar{\mathrm{d}}(t)\right] \eqz\eqz\\
	&\label{Cons_1} \text{s. t.} \quad ~C_1:  \sum\nolimits_{c=1}^C \sum\nolimits_{f=1}^F \mathrm{I}_{u}^{f_c} (t) \leq 1, \quad \forall ~ u, t  \\
	& \qquad \label{Cons_2} \quad C_2:~ (\ref{cacheConst1}), (\ref{cacheConst2}), \quad \forall ~ n, \\
	& \qquad \label{Cons_3} \quad C_3:~ \sum\nolimits_{u=1}^U \mathrm{I}_u(t) \leq W(t), \quad \forall~t, \\ 
	& \qquad \label{Cons_4} \quad C_4:~ \sum\nolimits_{i=1}^{W(t)} \mathrm{I}_{u}^{i,a} (t) = 1 ,\quad \forall~u, \\
	& \qquad \label{Cons_5} \quad C_5:~ \sum\nolimits_{u=1}^U \mathrm{I}_{u}^{i,a} (t) = 1 ,\quad \forall~i, \\ 
	& \qquad \label{Cons_6} \quad C_6:~ \sum\nolimits_{u=1}^U \sum\nolimits_{i=1}^{W(t)} \mathrm{I}_{u}^{i,a} (t) = W(t), \\
	& \qquad \label{Cons_7} \quad C_7: ~ W(t) = \text{min}\left\{\sum\nolimits_{u=1}^U \mathrm{I}_u(t), \mathrm{W}_{\text{max}}\right\},\\ 
	& \qquad \label{Cons_8} \quad C_8:~ \sum\nolimits_{b=1}^B \mathrm{I}_z^{b,u} (t) = 1 ,\quad \forall~z,u ,\\
	& \qquad \label{Cons_9} \quad C_9:~ \sum\nolimits_{z=1}^Z \mathrm{I}_z^{b,u} (t) = 1 ,\quad \forall~b,u ,\\
	& \qquad \label{Cons_10} \quad C_{10}:~ \sum\nolimits_{u=1}^U \mathrm{I}_z^{b,u} (t) = 1 ,\quad \forall~z,b, \\
	& \qquad \label{Cons_11} \quad C_{11}:~ \sum\nolimits_{z=1}^Z \sum\nolimits_{b=1}^B \sum\nolimits_{u=1}^U \mathrm{I}_z^{b,u} (t) = Z , \\
	& \qquad \label{Cons_12} \quad C_{12}:~ d_u^{f_c}(t) \leq \hat{\mathrm{d}}_{u}^{f_c} (t), \\
	& \qquad \label{Cons_13} \quad C_{13}:~ \mathrm{I}_{f_c}(n) , \mathrm{I}_u(t), \mathrm{I}_{u}^{i,a}(t), \mathrm{I}_{z}^{b,u}(t) \in \{0,1\}\!,\eqz\eqz\eqz\eqz\eqz  
\end{align}
\end{subequations}
where $\mathrm{W}_{\text{max}}$ is the maximum allowable number of VCs in the system.
Constraint $C_1$ ensures that each CV can request at most one content in each slot $t$. The constraints (\ref{cacheConst1}) and (\ref{cacheConst2}) in $C_2$ are due to physical storage limitations. 
Constraint $C_3$ in (\ref{Cons_3}) restricts the total number of scheduled CVs to at max the total number of created VCs $W(t)$.
Constraints (\ref{Cons_4}), (\ref{Cons_5}) and (\ref{Cons_6}) make sure that each CV can get at max one VC, each VC is assigned to at max one CV and summation of all assigned VCs is at max the total number of available VCs, respectively.
Besides, constraints $C_7$ in (\ref{Cons_7}) restricts the total VCs $W(t)$ to be at max the minimum of the total scheduled CVs and $\mathrm{W}_{\text{max}}$.
Furthermore, constraints (\ref{Cons_8}), (\ref{Cons_9}) and (\ref{Cons_10}) ensure that each AP can get at max one pRB, each pRB is assigned to at max one AP and each CV gets non-overlapping resources, respectively.
Constraint $C_{11}$ in (\ref{Cons_11}) ensures that all available radio resources are utilized.
$C_{12}$ is introduced to satisfy the entire payload delivery delay of a requested content to be within the edge server's hard deadline $\hat{\mathrm{d}}_u^{f_c}(t)$.
Finally, constraints in (\ref{Cons_13}) are the feasibility space.
\vspace{-0.05in}
\begin{rem}{(Intuitions behind the constraints)}
Constraint $C_1$ incorporates CV's content request, while $C_2$ is for the cache placement at the edge server. 
Constraint $C_3$ is for CV scheduling.
Constraints $C_4$ - $C_7$ are for the user-centric RAT's VC formation and associations.
Besides, constraints $C_8$ - $C_{11}$ are for radio resource allocation.
Moreover, $C_{12}$ is introduced to satisfy the hard deadline for delivering the CVs' requested contents, which holds if $ \sum\nolimits_{\bar{t}=t}^{\hat{\mathrm{d}}_u^{f_c}(t)} \mathrm{I}_u(\bar{t}) \cdot \kappa \cdot R_u(\bar{t}) \geq S$, where $\kappa$ is the transmission time interval (TTI).
\end{rem}

\vspace{-0.1in}
\begin{rem}
The total delay associated with each content request, calculated in (\ref{totalDelay}), depends on both cache placement and the RAT.
More specifically, an efficient cache placement solution can minimize cache miss events, i.e., minimize $d_{u,f_c}^{m,t}$s.
On the other hand, $d_{u,f_c}^{q,t}$ and $d_{u,f_c}^{s,t}$ depend on the CV scheduling, and total VC $W(t)$, VC configuration $\mathcal{B}_{vc}^a(W(t))$, CV-VC association $\mathrm{I}_{u}^{i,a}(t)$ and radio resource allocation $\mathrm{I}_{z}^{b,u}(t)$.
\end{rem}
\vspace{-0.05in}

Note that the optimization problem in (\ref{Original_Problem}) is an average Markov decision process (MDP) over an infinite time horizon with different combinatorial optimization variables.
Recall that the CSI varies in each slot $t$.
Besides, in slot $t$, neither the CV's to-be-requested contents nor the CSI in the future time slots are known beforehand.
As such, without knowing these details, the optimal decision variables may not be known.
In the subsequent section, we will prove that even the reduced problems of this complex joint optimization are NP-hard.
Moreover, the decision variables are different in different time slots. 
As such, we decompose the original problem into two sub-problems. 
The first sub-problem transforms the cache placement problem, which will be solved using a learning solution.
The second sub-problem introduces a joint CV scheduling, total $W(t)$ VC formation, association and resource allocation optimization problem for minimizing the average content delivery delay, given that the edge server knows the cache placement decisions.
The learning solution for the cache placement depends on the following preliminaries of DRL.

\vspace{-0.15in}
\subsection{Preliminary of Deep Reinforcement Learning}
\label{DRLPrem}
\noindent
An MDP contains a set of states $\mathcal{X}=\{x\}_{x=1}^{|\mathcal{X}|}$, a set of possible actions $\mathcal{M}=\{m\}_{m=1}^{|\mathcal{M}|}$, a transition probability $P_{tt'}(m)$ from the current state $x_t \in \mathcal{X}$ to the next state $x_{t'} \in \mathcal{X}$ when action $m$ is taken, and an immediate reward $R_t(m)$ for this state transition \cite{sutton2018reinforcement}.
RL perceives the best way of choosing actions in an unknown environment through repeated observations and is widely used for solving MDP. 
The RL agent learns policy $\pi: \mathcal{M} \times \mathcal{X} \rightarrow [0,1]$, where $\pi(x_t, m) = \mathrm{Pr}\{m|x_t\}$ denotes the probability of taking action $m$ given the agent is at state $x_t$.
Following $\pi$, given the agent is at state $x_t$, the expected return from that state onward denotes how good it is to be at that state and is measured by the following state-value function:
\begin{equation}
\label{state_val_func}
V_{\pi}(x_t) = \mathbb{E}[R|x_t,\pi] = \mathbb{E} 
\left[\sum\nolimits_{t'= t}^{\mathrm{T}_{\text{end}}} \gamma^{t'-t} R_{t'}(m)|x_t, \pi \right],
\end{equation}
where $\gamma \in [0,1]$ is the discount factor, $\mathrm{T}_{\text{end}}$ is the time step at which the episode ends, and $R_{t'}(m)$ is the reward at step $t'$.
Moreover, the quality of an action taken at a state is ascertained by the following action-value function \cite{sutton2018reinforcement}:
\begin{equation}
    Q(x_t,m) = R_t(m) + \gamma \sum\nolimits_{x_{t'}\in \mathcal{X}} P_{tt'}(m) V_{\pi}(x_{t'}).
\end{equation}
The agent's goal is to find optimal policy $\pi^{*}$ to maximize
\begin{subequations}
\label{Q_pi}
\begin{align}
	Q^{*}(x_t,m) &= R_t(m) + 
	\gamma \sum\nolimits_{x_{t'}\in \mathcal{X}} P_{tt'}(m) V_{\pi^{*}}(x_{t'}),
\end{align}
\end{subequations}
where $V_{\pi^{*}}(x_{t'})=\underset{\breve{m} \in \mathcal{M}}{\mathrm{max}} ~ Q^{*}(x_{t'},\breve{m})$.
This $Q(x_t, m)$ value is updated as \cite{watkins1992q}
\begin{equation}
\label{Q_Value}
\begin{aligned}
	Q \left(x_t, m\right)  & \leftarrow (1-\alpha) Q \left(x_t, m\right) + \alpha \bar{y}_t,
\end{aligned}
\end{equation}
where $\alpha$ is the learning rate and $\bar{y}_t=R_{t}(m) + \gamma \underset{\breve{m} \in \mathcal{M}}{\text{ max }}Q \left(x_{t'}, \breve{m} \right))$ is commonly known as the temporal target.
Usually, a deep neural network (DNN), parameterized by its weight $\pmb{\theta}$, is used to approximate $Q^{*}(x,m)\approx Q(x,m;\pmb{\theta})$, which is known as the so-called DRL \cite{mnih2015human}. The agent is trained by randomly sampling $\mathrm{S}_{b}$ batches from a memory buffer $\mathcal{D}$, which sores of the agent's experiences $\{x_t, m, R_t, x_{t'}\}$, and performing stochastic gradient descent (SGD) to minimize the following loss function \cite{mnih2015human}:
\begin{equation}
\label{train_loss}
    \mathrm{L}(\pmb{\theta}) = [
    \bar{y}_t(\pmb{\theta})- Q(x_t, m; \pmb{\theta}) ]^2,
\end{equation} 
where $\bar{y}_t(\pmb{\theta})=R_{t}(m) + \gamma \underset{\breve{m} \in \mathcal{M}} {\text{ max }}Q \left(x_{t'}, \breve{m};\pmb{\theta} \right)$.
While the same DNN $\pmb{\theta}$ can be used to predict both $Q(x_t,m; \pmb{\theta})$ and the target $\bar{y}_t(\pmb{\theta})$, to increase learning stability, a separate target DNN, parameterized by $\pmb{\theta^{-}}$, is used to predict $\bar{y}_{t}(\pmb{\theta}^{-})$ \cite{mnih2015human}.

Here we emphasize that since the $Q$ value functions are estimated using the DNN $\pmb{\theta}$, i.e., $Q^{*}(x,m)\approx Q(x,m;\pmb{\theta})$, unlike classical tabular Q-learning, the DRL solution may not be optimal \cite{sutton2018reinforcement}.
To that end, we first introduce our problem transformation in the next section, followed by more pertinent information on a DRL-based solution in Section \ref{problemSolutionSec}.

\vspace{-0.1in}
\section{Problem Transformations}
\label{probTransformationSec}
\noindent
Since the original problem is hard to solve and the decision variables are not the same in different time slots, we decompose the original problem by first devising a learning solution for cache placement policy (CPP) for the cache placement slot $t=n \Upsilon$.  
Then, we use this learned CPP to re-design the delay minimization problem from the RAT perspective. Intuitively, given that the best CPP for the slot $t=n \Upsilon$ is known, in order to ensure minimized content delivery delay, one should optimize the RAT parameters jointly.

\vspace{-0.15in}
\subsection{Cache Placement Policy (CPP) Optimization Sub-Problem}
\noindent
We want to learn the CPP $\pi_{ca}$ that provides the optimal cache placement decision $\mathrm{I}_{f_c}(n)$ in the cache placement time slots $t=n\Upsilon, ~\forall n$.
We have total $\mathrm{m}_{ca}=\prod_{c=1}^{C} \mathrm{m}_{ca}^c = \prod_{c=1}^{C} \binom{F}{\Lambda^c}$ ways for content placement as $\Lambda^c$ and $\Lambda$ are of the unit of content size $S$ based on constraints (\ref{cacheConst1}) and (\ref{cacheConst2}).
Moreover, the CPP $\pi_{ca}$ is a mapping between the system state $x_{ca}^n$ and an action $m_{ca}$ in the joint action space with $\mathrm{m}_{ca}(n)$ possible actions.
To this end, let us define a cache hit event by
\begin{equation}
\mathbbm{1}_{\mathrm{I}_u^{f_c}} (t) = \begin{cases}
	1, & \text{if } \mathrm{I}_u^{f_c}(t)=1 \text{ and } \mathrm{I}_{f_c}(n)=1, \\
	0, & \text{otherwise}.
\end{cases}.
\end{equation}
Besides, the total cache hit at the edge server is calculated as the summation of the locally served requests and is calculated as $h(t) = \sum_{u=1}^U \mathbbm{1}_{\mathrm{I}_u^{f_c}} (t)$.
Thus, we calculate the CHR as
\begin{equation}
\label{cache_hit_ratio}
    \mathrm{CHR}(t) = h(t) / \left( \sum\nolimits_{u=1}^{U} \mathrm{I}_u^{f_c}(t)\right). 
\end{equation}
Next, we devise the CPP of the edge server that ensures a long-term CHR while satisfying the cache storage constraints.
Formally, we pose the optimization problem as follows:
\begin{subequations}
\label{Cache_Hit_Max_Problem}
\begin{align}
	\tag{\ref{Cache_Hit_Max_Problem}} & \underset{ \pi_{ca} }  {\text{maximize}} &&  \mathrm{CHR}(\pi_{ca}) = \underset{T \rightarrow \infty}{\lim}~\mathbb{E} \big[ (1/T) \sum\nolimits_{t=1}^T \mathrm{CHR} (t) \big], \\
	\label{CH_P_Cons_1} & \quad \text{s. t.}  && C_1, C_2, ~ \mathrm{I}_{f_c}(n) \in \{0,1\},
\end{align}
\end{subequations}
where $C_1$ and $C_2$ are introduced in (\ref{Original_Problem}).

\begin{thm}
\label{Thm_CHR_NP_Hardness}
The CHR maximization problem (\ref{Cache_Hit_Max_Problem}) is NP-hard.
\end{thm}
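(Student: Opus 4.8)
The plan is to prove NP-hardness by exhibiting a polynomial-time reduction from a known NP-hard problem to a special case of (\ref{Cache_Hit_Max_Problem}). The natural first candidate is the 0-1 knapsack problem, since (\ref{Cache_Hit_Max_Problem}) already has the shape of a budget-constrained selection: choose a $0/1$ vector $\{\mathrm{I}_{f_c}(n)\}$ to maximize a hit objective subject to the storage bound (\ref{cacheConst1}). I would first collapse the long-horizon objective to a one-shot instance by restricting to a single DoI and a single content class ($C=1$), which deactivates the per-class equality (\ref{cacheConst2}) and leaves only the capacity constraint. Because the request process is independent of the cache (as assumed in Section \ref{cacheModelingSection}), the expected long-run $\mathrm{CHR}$ is a weighted sum whose coefficient on each content is the stationary frequency with which that content is requested under the $\epsilon_u$-policy; I would then show these frequencies can be steered to prescribed values by tuning the class probabilities $p_c^u$, the popularities $p_c^{f_c}$, and the similarity matrix $\Omega_{f_c}^{f'_c}$, so that a given item-profit vector is realized exactly.

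The main obstacle I anticipate is the uniform content size. With every content occupying exactly $S$ bits, (\ref{cacheConst1}) degenerates into a pure cardinality constraint, and maximizing an \emph{additive} hit objective under a cardinality constraint is solvable in polynomial time by top-$k$ selection. Hence the plain knapsack reduction cannot go through unchanged, and the hardness must instead originate from \emph{non-separability} of the objective rather than from heterogeneous weights. To supply that non-separability I would reduce instead from the maximum coverage problem, exploiting the exploration/exploitation correlation of the request model: under exploitation a request for $f_c$ deterministically triggers a request for its most similar content, so the similarity graph induced by $\Omega_{f_c}^{f'_c}$ lets a single cached content "cover" a whole family of request events. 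The crux is designing a gadget that maps the set-element incidence of a coverage instance onto this similarity graph so that the expected $\mathrm{CHR}$ becomes the coverage (union) function rather than a sum, while keeping the number of contents, classes, and slots polynomial in the size of the source instance.

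Finally I would discharge the two routine bookkeeping obligations that any reduction carries. First, that the constructed caching instance is computable in time polynomial in the input; second, that the correspondence between optima is exact in both directions, i.e., a yes-instance of the decision version of the source problem maps to a cache configuration meeting a prescribed $\mathrm{CHR}$ threshold, and any configuration meeting that threshold recovers a valid solution of the source problem. NP-hardness of this decision version then transfers immediately to the optimization problem (\ref{Cache_Hit_Max_Problem}), completing the argument.
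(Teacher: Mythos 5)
Your diagnosis of the central obstacle is correct and is the sharpest part of your proposal: since every content occupies exactly $S$ bits, constraints (\ref{cacheConst1})--(\ref{cacheConst2}) are cardinality constraints, and any objective that is \emph{additive} over cached contents is maximized in polynomial time by per-class top-$k$ selection. (This observation in fact bears on the paper's own proof, which fixes one DoI, assumes the requests over $[n\Upsilon,(n+1)\Upsilon]$ are known, and identifies the resulting instance (\ref{CHR_Instance}) with the multiple-choice knapsack problem; an MCKP instance in which all items have identical weight is exactly the greedy-solvable case, so that identification alone does not establish hardness.) The genuine gap lies in your escape route. In this model the CHR objective is \emph{provably modular} in the placement vector, no matter how the requests are correlated: a request is a hit if and only if the exact content requested is cached, so conditioned on any realization of the request process,
\begin{equation*}
h(t)\;=\;\sum\nolimits_{c=1}^{C}\sum\nolimits_{f=1}^{F}\mathrm{I}_{f_c}(n)\Big(\sum\nolimits_{u=1}^{U}\mathrm{I}_u^{f_c}(t)\Big),
\end{equation*}
which is linear in $\{\mathrm{I}_{f_c}(n)\}$, while the denominator of $\mathrm{CHR}(t)$ in (\ref{cache_hit_ratio}) does not depend on the placement at all; taking expectations preserves this linearity. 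The exploitation mechanism only shapes \emph{which} contents are requested in successive slots --- it never allows a cached content to ``cover'' a request for a different, similar content --- and Section \ref{cacheModelingSection} assumes the request process is independent of the cache contents, so no feedback channel exists by which placement could alter the request distribution. Hence the expected CHR can never equal a coverage (union) function of the cached set, and the gadget you defer as ``the crux'' is not merely unconstructed: it cannot exist. Your maximum-coverage reduction therefore fails at its essential step.

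By contrast, the paper's argument is far more elementary --- restrict to one cache-placement epoch with known requests and invoke MCKP --- though, as your own observation shows, it glosses over the fact that uniform content sizes collapse the knapsack structure. Any repair must inject hardness from a source the stated model actually contains; with unit-size contents, a linear objective, and per-class quotas, none of the remaining degrees of freedom (profits via request multiplicities, class probabilities, similarity matrices) suffices, because maximizing a linear function over such cardinality/partition constraints is polynomial. You would need either heterogeneous content sizes or a genuinely non-additive hit metric, both of which are departures from the model as given.
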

\begin{proof}
Please see Appendix \ref{Appendix_CHR_NP_Hardness}.
\end{proof}

\vspace{-0.35in}
\subsection{Joint Optimization Problem for the User-Centric RAT}
\noindent
Note that as the delay of extracting a content from the cloud during a cache miss event is fixed, the first term in (\ref{totalDelay}) will be minimized if the CPP $\pi_{ca}$ ensures maximized CHR.
In this sub-problem, we focus on the other two delays $d_{u,f_c}^{q,t}$ and $d_{u,f_c}^{s,t}$ in (\ref{totalDelay}) by jointly optimizing scheduling, VC formation, VC association and radio resource allocation of the proposed user-centric RAT solution assuming the cache placement is known at the edge server. 
Therefore, we pose the following modified content delivery delay minimization problem. 
\begin{subequations}
\label{Delay_Minimization}
\begin{align}
	\tag{\ref{Delay_Minimization}} 
	\!\!\!\!\!\!\! & \underset{\!\!\!\!\!\!\!\!\!\!\!\! \mathrm{I}_u(t), W(t), \mathcal{B}_{vc}^a\left(W(t)\right), \mathrm{I}_{u}^{i,a}(t), \mathrm{I}_{z}^{b,u}(t)} {\text{minimize}} \!\!\! \mathrm{d} = \underset{T \rightarrow \infty}{\limsup}~\mathbb{E} \big[ (1/T) \sum\nolimits_{t=1}^T \bar{\mathrm{d}}(t)\big] \!\!\!\!\! \\
	& ~ \label{Cons1_1} \qquad \quad \text{s. t. } \quad  C_3, C_4, C_5, C_6, C_7,C_8,C_9,C_{10},C_{11}, C_{12}, \\
	& \label{Cons1_2} \qquad \qquad \qquad ~ \mathrm{I}_u(t), \mathrm{I}_{u}^{i,a}(t), \mathrm{I}_{z}^{b,u}(t) \in \{0,1\}\!,\eqz\eqz\eqz\eqz\eqz  
\end{align}
\end{subequations}
where the constraints in (\ref{Cons1_1}) and (\ref{Cons1_2}) are taken for the same reasons as in the original problem in (\ref{Original_Problem}).

Sub-problem (\ref{Delay_Minimization}) contains combinatorial optimization variables and, thus, is NP-hard.
An exhaustive search for optimal parameters is also infeasible due to the large search space as well as sequential dependencies for the deadline constraints in $C_{12}$.
Besides, as each content request arrives with a deadline constraint and wireless links vary in each slot, we consider that the edge server adopts priority-based scheduling.  
Intuitively, given the fact that the edge server does not know the transmission delay $d_{u,f_c}^{s,t}$ due to channel uncertainty and it needs to satisfy constraint $C_{12}$ for all $\mathrm{I}_u^{f_c}(t)$s, it should schedule the CVs with earliest-deadline-first (EDF)\footnote{Similar scheduling is also widely used in literature \cite{8624268,9348504}.} followed by optimal VC formation, association and radio resource allocation.
Note that EDF is widely used for scheduling in real-time operating systems \cite{buttazzo2011hard}. 
If EDF cannot guarantee zero deadline violation for the tasks, no other algorithm can \cite{9348504}.
In our case, scheduling also depends on the availability of the requested content at the edge server.
In cache miss event, the edge server must wait for $d_{u,f_c}^{m,t}$ so that the upper layers can extract the content from the cloud.

Upon receiving a content request $\mathrm{I}_{u}^{f_c}(t)$, the edge server checks $\mathrm{I}_{f_c}(n)$. 
If $\mathrm{I}_{f_c}(n)=0$, the request is forwarded to the upper layers. 
The upper layer initiates the extraction process from the cloud.
At each slot $t$, before making the scheduling and VC formation decisions, the edge server considers previous $\mathcal{T}_{\text{SoI}}^t$ slots information. 
These $\mathcal{T}_{\text{SoI}}^t$ slots are termed as our slots of interest (SoI) and are calculated in (\ref{SoI}).
\begin{equation}
\label{SoI}
    \mathcal{T}_{\text{SoI}}^t = \left\{\text{min} \{0, t-d_{f}^{max}+\zeta\} \right\}_{\zeta=1}^{d_f^{max}}.
\end{equation}
This SoI captures the previous slots that may still have undelivered payloads with some remaining time to the deadlines at the current slot $t$.
Denote the remaining time to the deadline and payload for  $\mathrm{I}_{u}^{f_c}(t-d_f^{max}+\zeta)$ in current slot $t$ by $\mathrm{T}_{u,\text{rem}}^{t-d_f^{max}+\zeta}$ and $\mathrm{P}_{u,\text{rem}}^{t-d_f^{max}+\zeta}$, respectively.
Particularly, for all $\mathrm{I}_u^{f_c}(t-d_f^{max} + \zeta)$, the edge server first checks whether the content is available at the edge server's local cache storage or by the upper layers. 
If it is available, the edge server calculates the remaining time to the deadlines and payloads for the requests in all slots of $\mathcal{T}_{\text{SoI}}^{t}$.
The edge server finds a set of candidate requester CVs $\mathcal{U}^t_{\text{val}} \subseteq \mathcal{U}$, their minimum remaining time to the deadline set $\mathcal{T}_{\text{rem}}^t$ and corresponding left-over payload set $\mathcal{P}_{\text{rem}}^t$.
This procedure is summarized in Algorithm \ref{validCVSetExtraction}.
Note that the time complexity of Algorithm \ref{validCVSetExtraction} is $\mathcal{O}(4U|\mathcal{T}_{\mathrm{SoI}}^t| + 3U + 5)$.

After extracting the valid CV set $\mathcal{U}_{\text{val}}^t$, the edge server can formulate total $W(t)$ VCs based on the following equation:
\begin{equation}
\label{VC_To_form}
	W(t) = \text{min} \{ \left\vert \mathcal{U}_{\text{val}}^t \right\vert, \mathrm{W}_\text{max} \},
\end{equation}
where $\left \vert \mathcal{U}_{\text{val}}^t \right \vert$ is the cardinality of the set $\mathcal{U}_{\text{val}}^t$.
This essentially means that the server creates the minimum of the total valid CVs in the set $\mathcal{U}_{\mathrm{val}}^t$ and the maximum allowable number of VCs.
Besides, the edge server calculates the priorities of the valid CVs set based on their remaining time to the deadlines using the following equation:
\begin{equation}
\label{deadline_weight_in_rt}
    \phi_u(t) = \bar{\phi}_u(t) / \left(\sum\nolimits_{u \in\mathcal{U}_{\text{val}}^t} \bar{\phi}_u(t)\right), 
\end{equation}
where $\bar{\phi}_u(t) = \left(\sum\nolimits_{u \in \mathcal{U}_{\text{val}}^t} \mathcal{T}_{\text{val}}^t[u] \right) / \mathcal{T}_{\text{val}}^t[u] $.
Note that (\ref{deadline_weight_in_rt}) sets the highest priority to the CV that has the least remaining time to the deadline.
The edge server then picks the top-$W(t)$ CVs for scheduling based on the priorities $\phi_u(t)$s.
Denote the scheduled CV set during slot $t$ by $\mathcal{U}_{\text{sch}}^t \subseteq \mathcal{U}_{\text{val}}^t$.
Given that the edge server makes scheduling decisions based on top-$W(t)$ priorities of (\ref{deadline_weight_in_rt}), to satisfy the hard deadline constraint in $C_{12}$, we aim to maximize a WSR, which is calculated as
\begin{equation}
\label{wsr_cvs}
    \breve{R}(t) =  \sum\nolimits_{u \in \mathcal{U}_{\text{sch}}^t } \mathrm{I}_u(t) \cdot R_u(t) \cdot \phi_u(t),
\end{equation}
where the weights are set based on the CV's priority $\phi_u(t)$. 
Again, the intuition for this is that with the underlying RAT solution, due to channel uncertainty, the edge server expects to satisfy constraint $C_{12}$ by prioritizing the CVs based on (\ref{deadline_weight_in_rt}) and follow optimal VC configuration, their association and radio resource allocation.
As such, we pose the following WSR maximization problem for the edge server: 
\begin{subequations}
\label{Sum_Rate_Maximization}
\begin{align}
	\tag{\ref{Sum_Rate_Maximization}} \underset{\mathcal{B}_{vc}^a\left(W(t)\right), \mathrm{I}_{u}^{i,a}(t), \mathrm{I}_{z}^{b,u}(t)} {\text{maximize}} &  \qquad  \breve{R} (t), \\
	\label{Sum_Rate_Max_Cons_1} \text{subject to} ~\quad & \qquad C_4, C_5, C_6, C_8, C_9, C_{10}, C_{11},\\
	\label{Sum_Rate_Max_Cons_2} \quad & \qquad \mathrm{I}_{u}^{i,a}(t)\in \{0,1\}, \mathrm{I}_{z}^{b,u}(t) \in \{0,1\}, 
\end{align}
\end{subequations}
where the constraints in (\ref{Sum_Rate_Max_Cons_1}) and (\ref{Sum_Rate_Max_Cons_2}) are taken for the same reasons as in the original problem in (\ref{Original_Problem}).
\vspace{-0.1in}
\begin{rem}
	The edge server finds $W(t)$ VCs and $\mathrm{I}_u(t)$s using (\ref{VC_To_form}) and (\ref{deadline_weight_in_rt}), respectively.
    Given that the contents are placed following $\pi_{ca}$ during slot $t=n\Upsilon$, and the edge server knows $W(t)$ and $\mathrm{I}_u(t)$s, the joint optimization problem in (\ref{Delay_Minimization}) is simplified to a joint VC configuration, CV-VC association and radio resource allocation problem in (\ref{Sum_Rate_Maximization}).
\end{rem}

\vspace{-0.15in}
\section{Problem Solution}
\vspace{-0.05in}
\label{problemSolutionSec}
\noindent
The edge server uses a DRL agent to solve the transformed CHR maximization problem (\ref{Cache_Hit_Max_Problem}). 
Since the CVs request contents based on the preference-popularity tradeoff and their future demands are unknown to the edge server, DRL is adopted as a sub-optimal learning solution for (\ref{Cache_Hit_Max_Problem}).
Moreover, we optimally solve the joint optimization problem (\ref{Sum_Rate_Maximization}).
In order to do so, first, we leverage graph theory to find optimal pRB allocation based on a given VC configuration $\mathcal{B}_{vc}^a(W(t)) \in \mathcal{B}_{vc}(W(t))$. Then we perform a simple linear search to find the best VC configuration $\mathcal{B}_{vc}^{a^{*}}(W(t))$.

\vspace{-0.1in}
\begin{algorithm} [t!]
\fontsize{8}{7}\selectfont
\SetAlgoLined
\KwIn{$\mathcal{T}_{\text{SoI}}^t$, $\{\mathrm{I}_{u}^{f_c}(k)\}_{k\in\mathcal{T}_{\text{SoI}}^t}$, $\{\mathrm{T}_{u,\text{rem}}^{k}\}_{k \in \mathcal{T}_{\text{SoI}}^t}$, $\{\mathrm{P}_{u,\text{rem}}^{k}\}_{k \in \mathcal{T}_{\text{SoI}}^t}$ } 
Initiate empty valid CV set $\mathcal{U}_{\text{val}}^t = [\cdot]$, valid minimum time to the remaining deadline set $\mathcal{T}_{\text{rem}}^{t} = [\cdot]$ and valid remaining payload set $\mathcal{P}_{\text{rem}}^{t} = [\cdot]$ \;
Initiate an initial deadline matrix $\pmb{\mathrm{T}}_{rem}^t \gets Ones(U \times d_f^{max}) \times 100 $ \;
Initiate an initial payload matrix $\pmb{\mathrm{P}}_{rem}^t \gets Zeros(U \times d_f^{max})$ \;
\For{all $k$ slots in SoI set $\mathcal{T}_{\text{SoI}}^t$}{ 
	\For{$u \in \mathcal{U}$} {
		\uIf{$\mathrm{I}_u^{f_c}(k)$ is available at the edge server in this slot $t$ $\mathrm{and}$ $\mathrm{T}_{u,\text{rem}}^k>0$ $\mathrm{and}$ $\mathrm{P}_{u,\text{rem}}^k > 0$} {
			$ \pmb{\mathrm{T}}_{rem}^t[u, k] \gets \mathrm{T}_{u,\text{rem}}^k$ \;
			$ \pmb{\mathrm{P}}_{rem}^t[u, k] \gets \mathrm{P}_{u,\text{rem}}^k$ \;
		} 
	} 
}
\For{$u \in \mathcal{U}$}{
	Find the maximum remaining payload for CV $u$ in all slots inside the SoI $\mathcal{T}_{\text{SoI}}^t$ as $P_{max}^u=\mathrm{max}\{\pmb{\mathrm{P}}_{\text{rem}}^{t} [u,:]\}$\;
	\uIf{$P_{max}^u >0$}{
		Find the minimum valid remaining time to the deadline $k_{min}^{val} = \text{min}\{\pmb{\mathrm{T}}_{rem}^t[u,:]\}$ and corresponding slot index $k_{min}^{idx}$ \;
	 	$\mathcal{U}_{\text{val}}^t.\text{append}(u)$ \;
		$\mathcal{T}_{\text{rem}}^{t}.\text{append}(k_{min}^{val})$ \;
		$\mathcal{P}_{\text{rem}}^{t} .\text{append}\left(\pmb{\mathrm{P}}_{\text{rem}}^{t} \left[u,k_{min}^{idx}\right]\right)$\;
 	}
}
\KwOut{$\mathcal{U}_{\text{val}}^t$, $\mathcal{T}_{\text{rem}}^{t}$ and $\mathcal{P}_{\text{rem}}^{t}$}
\caption{Get Eligible CV Set for Scheduling}
\label{validCVSetExtraction}
\end{algorithm}

\vspace{-0.08in}
\subsection{Learning Solution for the CPP}
\noindent
To find the CPP $\pi_{ca}$, the edge server uses some key information from the environment and learns the underlying environment dynamics.
Recall that the CVs requests are modeled by the exploration and exploitation manner. 
At the beginning of each DoI, the edge server determines top-$\Lambda^c$ popular contents in each class and also calculates top-$\Lambda^c$ similar contents for each of these popular contents as
\begin{equation*}
\label{topPopSimCont}
\mathbf{F}^{top}(n)[c,f_c] = \begin{cases}
	1, & \text{if $f_c$ is top-$\Lambda^c$ similar content of $f_c^{top}$},\\
	0, & \text{otherwise},
\end{cases},
\end{equation*}
where $f_c^{top}$ is in top-$\Lambda^c$ popular content list of class $c$.
Besides, the edge server also keeps track of the content requests coming from each CV and corresponding cache hit based on the stored content during the previous DoI. 
Let the edge server store the content-specific request from CV $u$ into a $\mathbb{R}^{C \times F}$ matrix $\mathbf{P}_{req}^{u}(n)$ during all slots of the DoI.
Similarly, let there be a matrix $\mathbf{P}_{hit}^{u}(n) \in \mathbb{R}^{C \times F}$ that captures content-specific cache hit $\mathbbm{1}_{\mathrm{I}_{u}^{f_c}}(t)$s during all $t$ within the DoI.
Furthermore, we also provide the measured popularity matrix $\mathbf{P}_f(n)$ during the current DoI based on the CVs requests in the previous DoI change interval $(n-1)$.
As such, the edge server designs \textbf{state} $x_{ca}^{n}$ as the following tuple:
\begin{equation}
\label{cacheAgentState}
x_{ca}^{n} = \{\{\mathbf{P}_{req}^{u}(n)\}_{u=1}^{U}, \{\mathbf{P}_{hit}^{u}(n)\}_{u=1}^{U}, \mathbf{F}^{top}(n), \mathbf{P}_f(n)\}.
\end{equation}
The intuition behind this state design is to provide the edge server some context on how individual CVs' preferences and global content popularity may affect the overall system reward.

At each $t=n\Upsilon$, the edge server takes a cache placement action $m_{ca}$ to prefetch the contents in its local storage.
At the end of the DoI, it gets the following \textbf{reward} $r_{ca}^n$
\begin{equation}
\label{cacheAgentReward}
    r_{ca}^n =
    (1/\Upsilon) \sum\nolimits_{\tilde{t}=t}^{(n+1)\Upsilon} r_{ca} (\tilde{t}),
\end{equation}
where $r_{ca}(\tilde{t}) = \sum_{c=1}^C \sum_{f_c=1}^{F_c} r_{ca}[c,f_c]$.
Moreover, $r_{ca}[c,f_c]$ is calculated in (\ref{cacheAgentScaledRt}), where $\delta_{pop}^{sim}$ and $\delta_{hit}$ are two hyper-parameters. 
Note that these hyper-parameters balance the cache hit for the top-$\Lambda^c$ contents and the other stored contents in the edge server's cache storage. 
Empirically, we have observed $\delta_{pop}^{sim} > \delta_{hit}$ works well.
\begin{equation}
\label{cacheAgentScaledRt}
\fontsize{8}{7}\selectfont
\small
\eqz \eqz r_{ca}[c,f_c] (\tilde{t}) = \begin{cases}
	\delta_{pop}^{sim} \cdot \sum_{u=1}^U \mathbbm{1}_{\mathrm{I}_u^{f_c}} (\tilde{t}), & \text{if $ \mathbf{F}^{top}(n)[c, f_c]=1$} \\
	& \quad \text{and $\sum_{u=1}^{U} \mathrm{I}_u^{f_c} (\tilde{t}) > 0 $}, \\
	\delta_{hit} \cdot \sum_{u=1}^U \mathbbm{1}_{\mathrm{I}_u^{f_c}} (\tilde{t}), & \text{if $ \mathbf{F}^{top}(n)[c, f_c] \neq 1$} \\
	& \quad \text{and $\sum_{u=1}^{U} \mathrm{I}_u^{f_c} (\tilde{t}) >0 $},\\
	-  \sum_{f_c=1}^{F} \sum_{u=1}^{U} \mathrm{I}_u^{f_c}(\tilde{t}), & \text{otherwise},
\end{cases}\eqz,\eqz
\end{equation}
We consider that the edge server learns the CPP $\pi_{ca}$ offline.
It uses two DNNs - $\pmb{\theta}_{ca}$ and $\pmb{\theta}_{ca}^{-}$, and learns $\pi_{ca}$ following the basic principles described in Section \ref{DRLPrem}.
Algorithm \ref{CPP_Algorithm} summarizes the CPP learning process.
While the training episode is not terminated, in line $6$, the CVs make content requests.
During the cache placement slots $t=n\Upsilon$, line $7$, the edge server observes its state $x_{ca}^n$ in line $8$.
Based on the observed state, the agent takes action $m_{ca}$ following the $\epsilon$-greedy policy \cite{sutton2018reinforcement} using $\pmb{\theta}_{ca}$ in line $9$.
During the last time slot of the current DoI, in line $11$, the environment returns the reward $r_{ca}^n$ and transits to the next state $x_{ca}^{n'}$ in line $12$.
Moreover, in line $13$, the edge server stores its experiences tuple $\left\{x_{ca}^{n}, m_{ca}, r_{ca}^n, x_{ca}^{n'} \right\}$ into its memory buffer $mem_{ca}$, which can hold $mem_{ca}^{max}$ number of samples.
In line $15$, the edge server randomly sample $S_{ca}$ batches from $mem_{ca}$ and uses the $\pmb{\theta}_{ca}$ and $\pmb{\theta}_{ca}^{-}$ to get $Q(x_{ca}^n, m_{ca};\pmb{\theta}_{ca})$ and the target value $\bar{y}_t(\pmb{\theta}^{-})$, respectively.
In line $16$, it then trains the DNN $\pmb{\theta}_{ca}$ by minimizing the loss function shown in (\ref{train_loss}) using SGD.
Moreover, after $\breve{\eta}_{ca}$ steps, the offline DNN $\pmb{\theta}_{ca}^{-}$ gets updated by $\pmb{\theta}_{ca}$ in line $20$.

\vspace{-0.1in}
\subsection{WSR Maximization} 
\noindent
Recall that once the edge server determines $W(t)$ based on (\ref{VC_To_form}), all possible VC configurations $\mathcal{B}_{vc}(W(t)) =\left\{\mathcal{B}_{vc}^a\left(W(t)\right) \right\}_{a=1}^{A_{W(t)}}$ can be generated following the VC formation rules defined in (\ref{VC_Cons1})-(\ref{VC_Cons3}).
Besides, each VC configuration $\mathcal{B}_{vc}^a\left(W(t)\right)$ has exactly $W(t)$ number of VCs.
Moreover, the edge server schedules $|\mathcal{U}_{\text{sch}}^t|=W(t)$ CVs in each slot $t$ based on the priority $\phi_{u}(t)$.
Let the $i^{\text{th}}$ CV in $\mathcal{U}_{\text{sch}}^t$ be assigned to the $i^{\text{th}}$ VC in $\mathcal{B}_{vc}^a\left(W(t)\right)$.
This assigns each CV to exactly one VC and all VCs are assigned to all scheduled CVs.
Therefore, essentially, for a selected VC configuration $\mathcal{B}_{vc}^a\left(W(t)\right)$, by assigning the VCs in the above mentioned way, the edge server can satisfy constraints $C_3$, $C_4$, $C_5$ and $C_6$.
To this end, given that the selected VC configuration $\mathcal{B}_{vc}^a\left(W(t)\right)$ and $\mathrm{I}_{u}^{i,a}(t)$ are known at the edge server, we can rewrite (\ref{Sum_Rate_Maximization}) as follows:
\begin{subequations}
\label{WSR_Max_pRB}
\begin{align}
	\tag{\ref{WSR_Max_pRB}} \underset{\mathrm{I}_{z}^{b,u}(t)} {\text{maximize}} &  \qquad  \breve{R} (t), \\
	\label{WSR_Max_pRB_Cons_1} \text{subject to}  & \qquad C_8, C_9, C_{10}, C_{11}, \mathrm{I}_{z}^{b,u}(t) \in \{0,1\}. 
\end{align}
\end{subequations}
As the CSI is perfectly known at the edge server, it can choose maximal ratio transmission to design the precoding vector $\mathbf{w}_{b}^{u,z}$.
In other words, given $\mathrm{I}_{z}^{b,u}(t)=1$, the edge server chooses $\mathbf{w}_{b}^{u,z}(t)=\mathbf{h}_{b}^{u,z}(t)/\left\Vert \mathbf{h}_{b}^{u,z}(t) \right \Vert$.
Besides, the received SNR at the CV $u$, calculated in (\ref{V2I_SINR}), is the summation over all APs of the CV's assigned VC divided by total noise power. 
As such, we can stack the weighted data rate at the CV from the APs that are in its serving VC over all pRBs into a matrix - denoted by $\mathbf{R}_t \in \mathbb{R}^{B \times Z}$ matrix. 
This weighted data rate matrix extraction process is summarized in Algorithm \ref{Extract_WSR_Matrix_per_given_Sch_CVs_VC_Selection}.
In this algorithm, we initiate a matrix of zeros of $\mathbb{R}^{B \times Z}$ in line $1$.
Recall that all VCs are assigned to the scheduled CVs and all APs are assigned to form the VCs based on the rules defined in Section \ref{comSysModel}. 
As such, for each $u \in \mathcal{U}_{sch}^t$, we get the assigned VC in line $3$.
Then, for all APs and all pRBs, we calculate the spectral efficiency in line $6$.
Moreover, we update the respective $(b,z)$ element of the $\mathbf{R}_t$ matrix in line $7$.
Note that Algorithm \ref{Extract_WSR_Matrix_per_given_Sch_CVs_VC_Selection}'s time complexity is $\mathcal{O} \big(W(t) \big[2 Z|VC_a^i| +1 \big] + 1 \big)$.

\SetInd{0.9em}{0.1em}
\begin{algorithm}[!t]
\fontsize{8}{6}\selectfont
\SetAlgoLined
\SetKwInput{KwData}{Input}
\KwData{$S_{ca}$, $Mem_{ca}$, $\breve{\eta}_{ca}$,  $\epsilon_{max}$, $\epsilon_{min}$, $\nu$, $T_{epoch}$, $\Upsilon$, $\pmb{\theta}_{ca}$, $\pmb{\theta}_{ca}^{-}$} 
Calculate $\epsilon$ decaying rate as $decay_\epsilon = \frac{\epsilon_{max} - \epsilon_{min}}{\nu \times T_{epoch}}$ \; 
\For{$e$ in $T_{epoch}$} {
	$\epsilon \gets max\{\epsilon_{min}, ~ \epsilon_{max} - \left(e \times decay_\epsilon\right) \}$ \;
	Set $t=0$, $n=0$, $done$=$\mathrm{False}$ \;
	\While{not $done$} {
		Get all CVs content requests using Section \ref{Content_Request_Modeling} \;
        \uIf {$t==0$ or  $\left((t+1) \mod \Upsilon\right)==0$} {
		    Get state $x_{ca}^{n}$ from the environment \;
		    Edge server takes cache placement action $m_{ca}$ based on observation $x_{ca}^{n}$ using its action selection policy \; 
			$n \mathrel{+}= 1$ \;
		}
        \uElseIf{$(t+1)==\left(n \Upsilon-1\right)$}  {
			Get $r_{ca}^n$, $x_{ca}^{n'}$ and $done$ flag from environment \;
			Store $\{x_{ca}^n, m_{ca}, r_{ca}^n, x_{ca}^{n'}, done\}$ into $Mem_{ca}$ \;
			\uIf(\tcp*[h]{train $\pmb{\theta}_{ca}$}){$len(Mem_{ca}) \geq S_{ca}$}{
				Uniformly sample $S_{ca}$ samples from $Mem_{ca}$ \;
				Use the sampled samples to train $\pmb{\theta}_{ca}$ \;
			}
			$x_{ca}^{n} \gets x_{ca}^{n'}$ \;
		}
		$t \mathrel{+}= 1$\;
		\uIf(\tcp*[h]{Update  $\pmb{\theta}_{ca}^{-}$}) {$t \mod \breve{\eta}_{ca}==0$} {
			$\pmb{\theta}_{ca}^{-} \gets \pmb{\theta}_{ca}$
		}
	}
}
\KwOut{$\pmb{\theta}_{ca}$}
\caption{CPP Learning Algorithm}
\label{CPP_Algorithm}
\end{algorithm}	
\SetInd{0.9em}{0.1em}
\begin{algorithm}[!t]
	\fontsize{8}{6}\selectfont
	\SetAlgoLined
	\SetKwInput{KwData}{Input}
	\KwData{$\mathcal{U}_{\text{sch}}^t$, $\mathcal{B}_{vc}^a(W(t))$, $\{\phi_u(t)\}_{u \in \mathcal{U}_{\text{sch}}^t}$, $\mathbf{H}_u^b(t)$ } 
	Initiate matrix $\mathbf{R}_t = zeros(B \times Z)$ \;
	\For{$u \in \mathcal{U}_{\text{sch}}^t$} {
		Get assigned $VC_a^i$ using $\mathrm{I}_{u}^{i,a}(t)$ \;
		\For{$b \in VC_a^i$}{ 
			\For {$z \in \mathcal{Z}$}{
				Calculate $r_t(u,b,z) = \log_2\left(1 + \frac{P_b \left\vert \mathbf{h}_{b}^{u,z}(t)^H \mathbf{w}_b^{u,z}(t) \right\vert^2}{\sigma_b^2}\right) $ received at CV $u$ from AP $b$ over pRB $z$ during slot $t$ \;
				$\mathbf{R}_t[b,z] = r_t(u,b,z) \times \phi_u(t)$ \;
			} 
			}
	}  
	\KwOut{$\mathbf{R}_t$}
	\caption{Get Weighted Data Rate Matrix}
\label{Extract_WSR_Matrix_per_given_Sch_CVs_VC_Selection}
\end{algorithm}
\begin{algorithm} [t!]
	\fontsize{8}{6}\selectfont
	\SetAlgoLined
	\SetKwInput{KwData}{Input}
	\KwIn{$W(t)$, $\mathcal{U}_{\text{sch}}^t$, $\{\phi_u(t)\}_{u \in \mathcal{U}_{\text{sch}}^t}$, $\mathbf{H}_u^b(t)$, $\mathcal{B}_{vc}(W(t))$} 
	Initiate WSR vector $\breve{\mathbf{r}}_t = zeros(A_{W(t)} )$ and empty pRB allocation set $\pmb{\mathrm{I}}_z(t) = [\cdot]$\;
	\For{$\mathcal{B}_{vc}^a(W(t)) \in \mathcal{B}_{vc}(W(t))$} {
		Get $\mathbf{R}_t$ matrix from Algorithm \ref{Extract_WSR_Matrix_per_given_Sch_CVs_VC_Selection} for this VC configuration \;
		Solve the MWBM problem using Hungarian algorithm \cite{kuhn1955hungarian} to get optimal pRB allocation set $\pmb{I}_z^{*}=\{\mathrm{I}_{z}^{b,u}(t)\}_{z=1}^Z$ and	get the optimal sum-weights $r_t^{*}$ from the optimal edges $e^{*}(b,z)$ \;
		$\breve{\mathbf{r}}_t[a]=r_t^{*}$ \;
		$\pmb{\mathrm{I}}_z(t).append(\pmb{I}_z^{*})$ \;
	}  
	Find the $\text{max}(\breve{\mathbf{r}}_t)$ and corresponding index $a^{*}$ \;
	Take best VC configuration $\mathcal{B}_{vc}^{a^{*}}(W(t))$ and corresponding optimal pRB allocation set ${\pmb{\mathrm{I}}_{z}^{b,u}}^{*}(t) = \pmb{\mathrm{I}}_z(t)[a^{*}]$\;
	\KwOut{$\mathcal{B}_{vc}^{a^{*}}(W(t))$ and ${\pmb{\mathrm{I}}_{z}^{b,u}}^{*}(t)$}
	\caption{Optimal VC Configuration and pRB Allocation}
	\label{getPRBAllocationAndVCConfig}
\end{algorithm}
\begin{algorithm} [t!]
\fontsize{8}{6}\selectfont
\SetAlgoLined 
\SetKwInput{KwData}{Input}
\KwIn{$\mathrm{I}_{u}^{f_c}(t)$'s of all CVs in content delivery slot $t$} 
	Check if the requested contents are in the cache storage, if any requested content is not available, forward the request to upper layer for extraction from cloud \;
	Calculate SoI $\mathcal{T}_{\text{SoI}}^t$ using (\ref{SoI})\;
	Find eligible CV set $\mathcal{U}_{\text{val}}^t$ using Algorithm \ref{validCVSetExtraction} \;
	Find total number of VC to formulate, i.e., $W(t)$ using (\ref{VC_To_form}) \;
	Calculate eligible CVs', i.e., $u \in \mathcal{U}_{\text{val}}^t$, priorities using (\ref{deadline_weight_in_rt}) \;
	Get the CV set $\mathcal{U}_{\text{sch}}^t$ to schedule by picking the top-$W(t)$ $\phi_u(t)$s  \;
	Find optimal VC configuration $\mathcal{B}_{vc}^{a^{*}}(W(t))$ and optimal pRB allocations $\mathbf{I}_z^{{b,u}^{*}}(t)$ by running Algorithm \ref{getPRBAllocationAndVCConfig} \;
	Based on VC configuration $\mathcal{B}_{vc}^{a^{*}}(W(t))$ and $\mathbf{I}_z^{{b,u}^{*}}(t)$ calculate CVs SNRs $\pmb{\Gamma}_t = \{\Gamma_{u}^z(t) \}_{u \in \mathcal{U}_{\text{sch}}^t}$ using (\ref{V2I_SINR}) \;
	Calculate $R_u^{bit}(t)$ using (\ref{possileTxBitsPerState}) for all $u \in \mathcal{U}_{\text{sch}}^t$ \;
	Offload $R_u^{bit}(t)$ bits from the remaining payloads of all CVs $u \in \mathcal{U}_{\text{sch}}^t$ orderly from the requests made in the SoIs $\mathcal{T}_{\text{SoI}}^t$ \;
	Update all $u \in \mathcal{U}$ remaining payload and deadline \;
\caption{Content Delivery Model}
\label{deliveryModelAlgo}
\end{algorithm}

Upon receiving $\mathbf{R}_t$, the edge server leverages graph theory to get the optimal assignment as follows.
It forms a bipartite graph $G=(\mathcal{B} \times \mathcal{Z}, \mathcal{E})$, where $\mathcal{B}$ and $\mathcal{Z}$ are the set of vertices, and $\mathcal{E}$ is the set of edges that can connect the vertices \cite{west2001introduction}.
Moreover, $R_t[b,z]$ are the weights of edge $e(b,z)$ that connects vertex $b\in\mathcal{B}$ and $z\in \mathcal{Z}$.
Note that, for the graph $G$, a matching is a set of pair-wise non-adjacent edges where no two edges can share a common vertex. 
This is commonly known as the maximum weighted bipartite matching (MWBM) problem \cite{west2001introduction}.
The edge server needs to find the set of edges $e^{*}(b,z) \in \mathcal{E}$ that maximizes the summation of the weights of the edges. 
Moreover, the edge server uses well-known Hungarian algorithm \cite{kuhn1955hungarian} to get the optimal edges $e^{*}(b,z)$, i.e., pRB allocations $\mathrm{I}_{z}^{b,u}(t)$s in polynomial time.
This pRB allocation is, however, optimal only for the selected VC configuration $\mathcal{B}_{vc}^a(W(t))$.
In order to find the best VC configuration $\mathcal{B}_{vc}^{a^{*}} \left(W(t)\right)$, the edge server performs a simple linear search over all $A_{W(t)}$s VC configurations.
As such, we can solve problem (\ref{Sum_Rate_Maximization}) optimally using the above techniques.
Algorithm \ref{getPRBAllocationAndVCConfig} summarizes the steps.
Note that Algorithm \ref{getPRBAllocationAndVCConfig} has a time complexity of $\mathcal{O} \big(A_{W(t)}\big[W(t)(2Z|VC_a^i| + 1 ) + Z^3 + 4 \big] + 3 \big)$.

\vspace{-0.1in}
\subsection{Content Delivery Process}

\noindent
Contents are placed using the trained CPP $\pi_{ca}$ during each cache placement slot $t=n\Upsilon$, while the CVs make content requests in each $t$ following Section \ref{Content_Request_Modeling}.
Please note that, during $t=n\Upsilon$, the edge server only requires to perform one forward pass\footnote{The time complexity of the forward pass depends on the input/output size and DNN architecture.} on the trained $\pmb{\theta}_{ca}$.
Upon receiving the $\mathrm{I}_u^{f_c}(t)$s, the edge server checks whether $\mathrm{I}_{f_c}(n)=1$ or $\mathrm{I}_{f_c}(n)=0$.
If $\mathrm{I}_{f_c}(n)=1$, $f_c$ can be delivered locally.
All cache miss events are forwarded to the VEN's upper layers. 
The upper layer extracts each cache missed content from the cloud with an additional delay of $d_{u,f_c}^{m,t}$.
In all $t$, the edge server calculates the SoI $\mathcal{T}_{\text{SoI}}^t$ using (\ref{SoI}).
It then finds the eligible CV set $\mathcal{U}_{\text{val}}^t$ and forms total $W(t)$ VCs using Algorithm \ref{Extract_WSR_Matrix_per_given_Sch_CVs_VC_Selection} and (\ref{VC_To_form}), respectively. 
To that end, the edge server calculates the priorities $\phi_u(t)$s using (\ref{deadline_weight_in_rt}) and selects top-$W(t)$ CVs to schedule.
Once the edge server knows $W(t)$, $\phi_u(t)$s and $\mathcal{U}_{\text{sch}}^t$, it runs Algorithm \ref{getPRBAllocationAndVCConfig} to get the VC configuration and pRB allocations that maximizes the WSR of (\ref{Sum_Rate_Maximization}).
Algorithm \ref{getPRBAllocationAndVCConfig} returns the $\mathcal{B}_{vc}^{a^{*}}$ and $\mathbf{I}_{z}^{{b,u}^{*}}(t)$ which then can be used to get the SNRs $\Gamma_{u}^z(t)$s from (\ref{V2I_SINR}).
Upon receiving the SNRs $\Gamma_{u}^z(t)$s, the edge server can calculate the possible transmitted bits for the CVs as follows: 
\vspace{-0.1in}
\begin{equation}
\label{possileTxBitsPerState}
	R_u^{bit}(t) = \kappa \cdot R_u(t).
\end{equation}
The edge server then delivers the remaining $\mathrm{P}_{u,\text{rem}}^{t-d_f^{max}+\zeta}$s sequentially.
This entire process is summarized in Algorithm \ref{deliveryModelAlgo}.
The time complexity of running Algorithm \ref{deliveryModelAlgo} is $\mathcal{O} \big(U\big[\Lambda/S + 4|\mathcal{T}_{\mathrm{SoI}}^t| + 4\big] + W(t) \big[\log( |\mathcal{U}_{\mathrm{val}}^t|) + A_{W(t)} (2Z|VC_a^i| + 1 ) + 3 \big] + A_{W(t)}\big[Z^3 + 4 \big] + |\mathcal{U}_{\mathrm{val}}^t| + |\mathcal{T}_{\mathrm{SoI}}^t| + 10 \big)$.

\vspace{-0.1in}
\begin{table}[!t] 
	\caption{System Parameters }
	\centering
	\fontsize{7}{7}\selectfont
	\begin{tabular}{|c|c|}  \hline
		\textbf{Item/Description} & \textbf{Value} \\ \hline
		Total number of APs $B$ & $6$ \\ \hline
		Maximum possible VC per slot $\mathrm{W}_{\text{max}}$ & $5$ \\ \hline
		TTI $\kappa$ & $1$ ms \\ \hline
		DoI update interval $\Upsilon$ & $50 \times \kappa$ \\ \hline 
		Carrier frequency & $2$ GHz \\ \hline 
		pRB size $\omega$ & $180$ KHz \\ \hline
		Noise power $\sigma^2$ & -$174$ dBm/Hz \\ \hline
		AP coverage radius & $250$ m \\ \hline
		Antenna/AP $L$ & $4$ \\ \hline
		AP antenna height & $25$ m \\ \hline
		CV antenna height & $1.5$ m \\ \hline
		Transmission power $P_b$ & $30$ dBm \\ \hline 
		AP transmitter antenna gain $G_{TX}^b$ & $8$ dBi \\ \hline
		CV receiver antenna gain $G_{RX}^u$ & $3$ dBi \\ \hline
		CV receiver noise figure $L_{RX}^u$ & $9$ dB \\ \hline
		Total content class $c$ & $3$ \\ \hline 
		Contents per class $|\mathcal{F}_c|$ & $5$ \\ \hline
		Feature per content $G_c$ & $10$ \\ \hline
		AN cache size $\Lambda$ & $\{3,6,9,12\} \times S$  \\ \hline 
		Max allowable delay $d_f^{max}$ & $10 \times \kappa$ \\ \hline
		Content extraction delay $d_{u,f_c}^{m,t}$ & $5 \times \kappa$ \\ \hline
		CV active probability $p_u$ & Uniform$(0.1, 1)$ \\ \hline 
		CV's inclination to similarity/popularity $\epsilon_u$ & Uniform$(0, 1)$ \\ \hline
	\end{tabular}
	\label{System_Simulation_Params}
\end{table}
\begin{figure}[!t] \vspace{-0.15in}
    \begin{minipage}{0.24\textwidth}
    	\centering
    	\includegraphics[trim=110 0 0 0, clip, width=\textwidth]{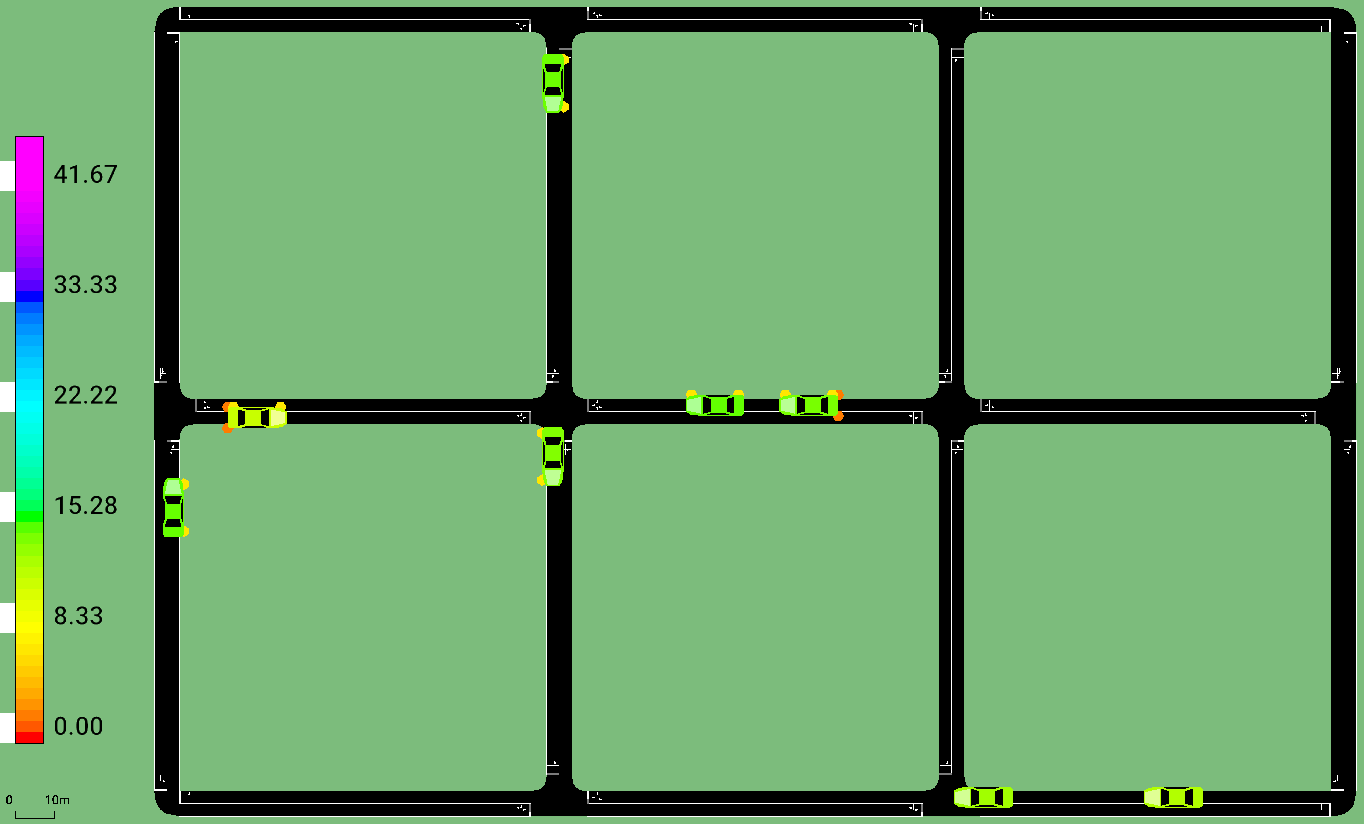}
    	\caption{Simulated RoI}
    	\label{simEnv}
    \end{minipage} 
    \begin{minipage}{0.24\textwidth}
    	\centering
    	\includegraphics[width=\textwidth]{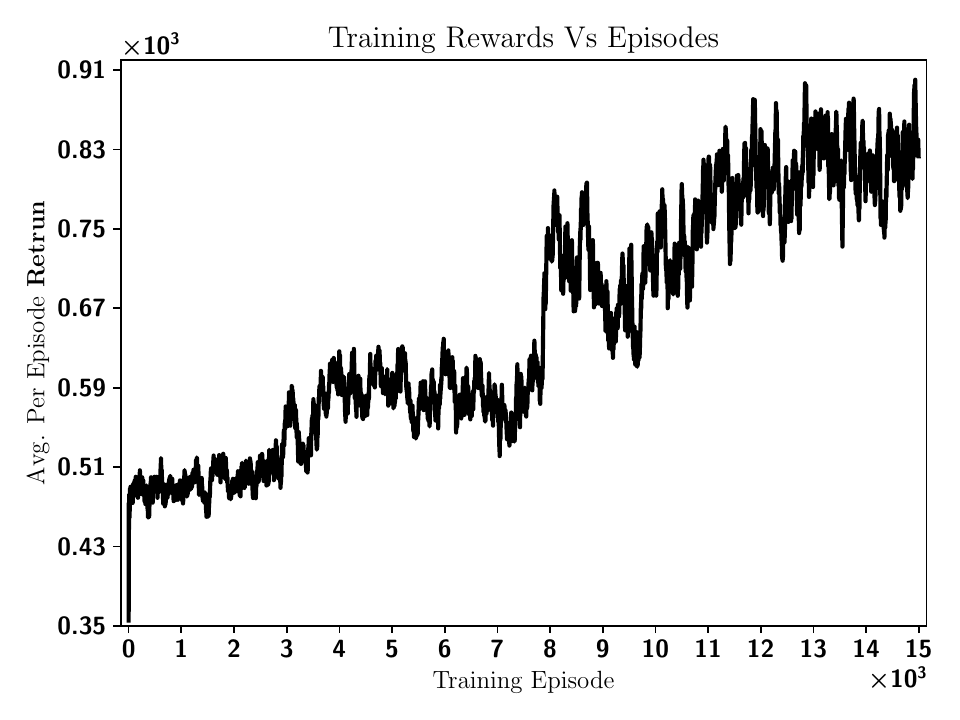}
    	\caption{CPP learning: average return during training}
    	\label{caTrainEpochVsReward}
    \end{minipage} 
\end{figure}

\section{Performance Evaluation}
\label{resultDiscussSection}
\subsection{Simulation Setting}
\noindent
We consider $U$ CVs roam over a region of interest (RoI) and deploy $B=6$ APs alongside the road to cover the entire RoI.
Table \ref{System_Simulation_Params} shows other key simulation parameters used in this paper.
We consider a $300$ meters by $200$ meters Manhattan grid model \cite{3GPP_TR_38_886} with two-way roads as shown in Fig. \ref{simEnv}.
For realistic microscopic CV mobility modeling, we use the widely known simulation of urban mobility (SUMO) \cite{8569938}.
The CVs are deployed with some initial routes with a maximum speed of $45$ miles/hour and later randomly rerouted from the intersections on this RoI. 
In SUMO, we have used car-following mobility model \cite{roy2011handbook} and extracted the CVs' locations using the Traffic Control Interface \cite{wegener2008traci} application programming interface.

To design our simulation episode, we consider $1000 \kappa$ milliseconds of CVs activities. 
For the CPP learning, the edge server uses DNN $\pmb{\theta}_{ca}$ that has the following architecture: $2$D convolution (\textit{Conv$2$d)} $\rightarrow$ \textit{Conv$2$d} $\rightarrow$ \textit{Linear} $\rightarrow$ \textit{Linear}.
We train $\pmb{\theta}_{ca}$ in each cache placement slots with a batch size $S_{ca}=512$.
Besides, we choose $\gamma=0.995$, $\epsilon_{max}=1$, $\epsilon_{min}=0.005$, $\nu=0.6$, $Mem_{ca}^{max}=15000$, $T_{epoch}=15000$, $\breve{\eta}_{ca}=4\Upsilon$.
For training, we use $Adam$ as the optimizer with a learning rate of $0.001$. 
Using our simulation setup, the edge server first learns $\pi_{ca}$ using Algorithm \ref{CPP_Algorithm} for $T_{epoch}$ episodes.
The average per state returns during this learning is shown in Fig. \ref{caTrainEpochVsReward}. 
As the training progresses, we observe that the edge server learns to tune its policy to maximize the expected return. 
After sufficient exploration, the edge server is expected to learn the CPP that gives the maximized expected return.
As a result, it is expected that the reward will increase as the learning proceeds. 
Fig. \ref{caTrainEpochVsReward} also validates this and shows the convergence of Algorithm \ref{CPP_Algorithm}.
As such, we use this trained CPP $\pi_{ca}$ for performance evaluations in what follows.

\vspace{-0.1in}
\subsection{Performance Study}
\noindent
We first show the performance comparisons of the learned CPP with the following baselines without any RAT solution.

\textbf{Genie-Aided cache replacement} (Genie): The to-be requested contents are known beforehand during the start of the DoI provided by a Genie. In this best case, we then store the top-$\Lambda^c$ requested contents from all $c\in \mathcal{C}$ in all $n$.

\textbf{Random cache replacement (RCR)}: In this case, contents from each class are selected randomly for cache placement.

\textbf{$K$-Popular ($K$-PoP) replacement} \cite{8809280}: In this popularity-based caching mechanism, we store the most popular $K=\Lambda^c$ contents during the past DoI for each content class  $c\in\mathcal{C}$. 

\textbf{Modified $K$-PoP+LRU ($K$-LRU) replacement}: We modify the popularity-driven $K$-PoP with classical least recently used (LRU) \cite{cacheRepSurvey} cache replacement. The least popular contents in the $K$-PoP contents are replaced by the most recently used but not in $K$-PoP contents to prioritize recently used contents.

To this end, we vary the cache size of the VEN and show the average CHR during an episode in Fig. \ref{cacheSizeVsCHRWoRAT}. 
The general intuition is that when we increase the cache size $\Lambda$, more contents can be placed locally. 
Therefore, by increasing $\Lambda$, the average CHR is expected to increase. 
$K$-PoP and $K$-LRU do not capture the heterogeneous preferences of the CVs.
Similarly, as contents are replaced randomly with the naive RCR baseline, it should perform poorly.
However, when the cache size is relatively small, solely popularity-based $K$-PoP performs even worse than RCR.
This means that popularity does not dominate the content demands of the CVs.
Moreover, when the cache size becomes moderate, $K$-PoP and $K$-LRU outperform the naive RCR baseline.
On the other hand, the proposed CPP aims to optimize $\pi_{ca}$ by capturing the underlying preference-popularity tradeoff of the CVs.
Therefore, the average CHR is expected to be better than the baselines.
Fig. \ref{cacheSizeVsCHRWoRAT} also reflects these analysis. 
Moreover, notice that the performance gap with the Genie-aided average CHR and our proposed CPP is lower. 
In the VEN, we do not know the future and CVs' content demands. 
Therefore, we can only predict the future and tune the CPP $\pi_{ca}$ accordingly. 
Particularly, when the cache storage is reasonable, the performance gap of the proposed CPP is much lower. For example, at $\Lambda=9$ and $\Lambda=12$ the proposed CPP delivers around $93\%$ and $98\%$ of the Genie-aided solution.
Moreover, the baselines perform poorly regardless of $\Lambda$. For example, at $\Lambda=9$, the proposed CPP is around $49\%$, $23 \%$ and $24\%$ better than RCR, $K$-PoP and $K$-LRU, respectively.

\begin{figure*} \vspace{-0.1in}
\begin{minipage}{0.33\textwidth}
	\centering
	\includegraphics[trim=12 5 40 10, clip, width=\textwidth]{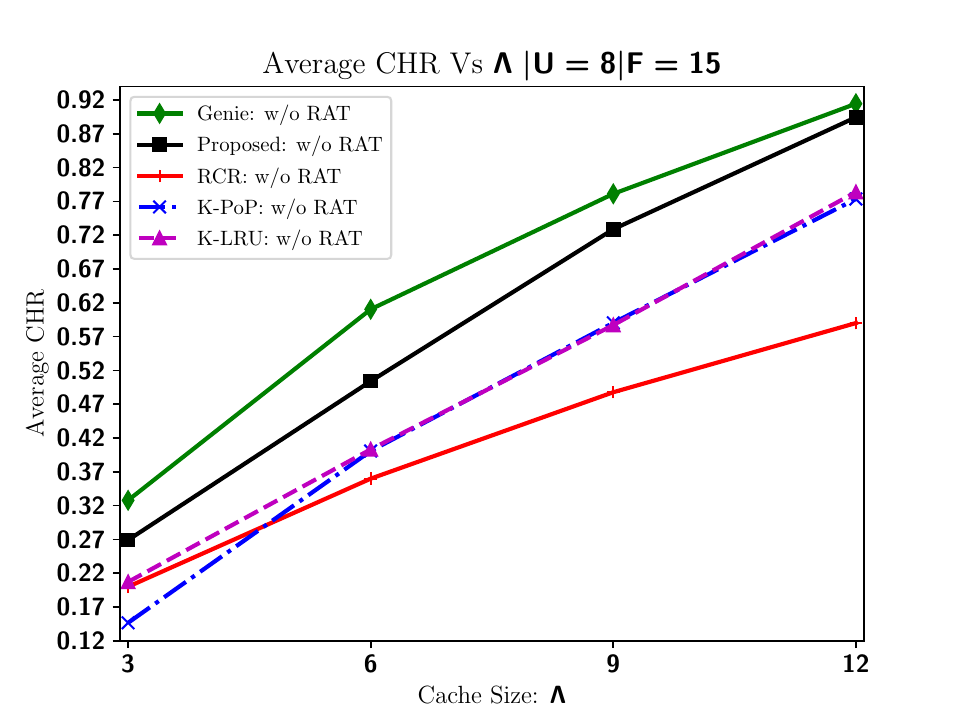} \vspace{-0.2in}
	\caption{CHR comparison with baselines when $\Upsilon=50 \times \kappa$ (without RAT)}
	\label{cacheSizeVsCHRWoRAT}
\end{minipage}  \hspace{0.01in}
\begin{minipage}{0.33\textwidth}
	\centering
	\includegraphics[trim=12 5 40 10, clip, width=\textwidth]{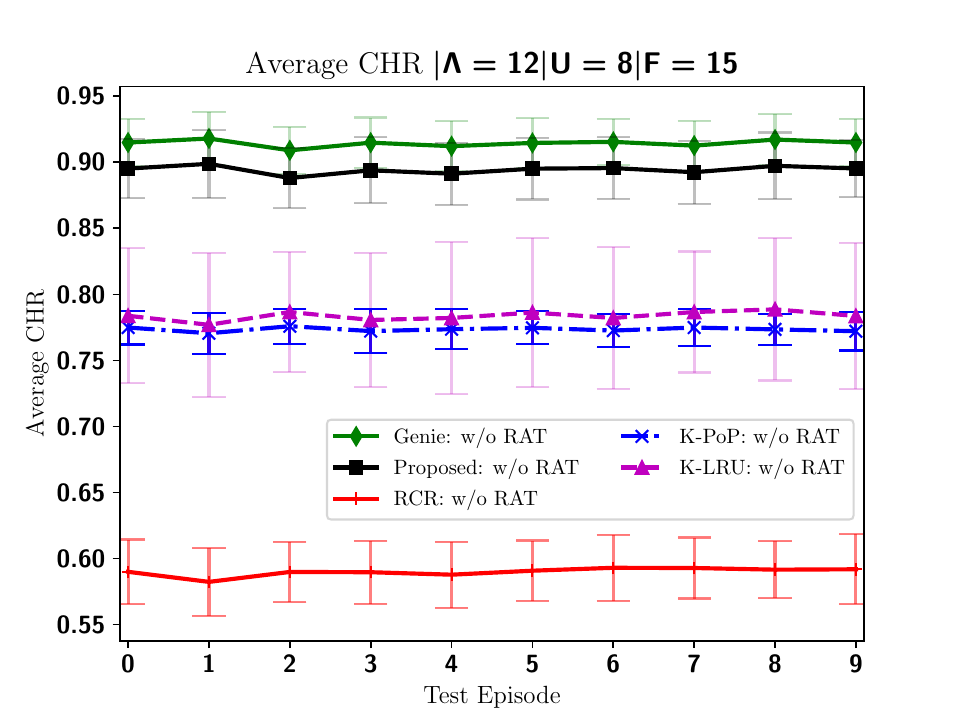} \vspace{-0.2in}
	\caption{CHR comparison with baselines for $10$ test episodes when $\Upsilon=50\times \kappa$ (without RAT)}
	\label{chrVsEpochWoRAT}
\end{minipage} \hspace{0.01in}
\begin{minipage}{0.33\textwidth}
	\centering
	\includegraphics[trim=12 5 38 10, clip, width=\textwidth]{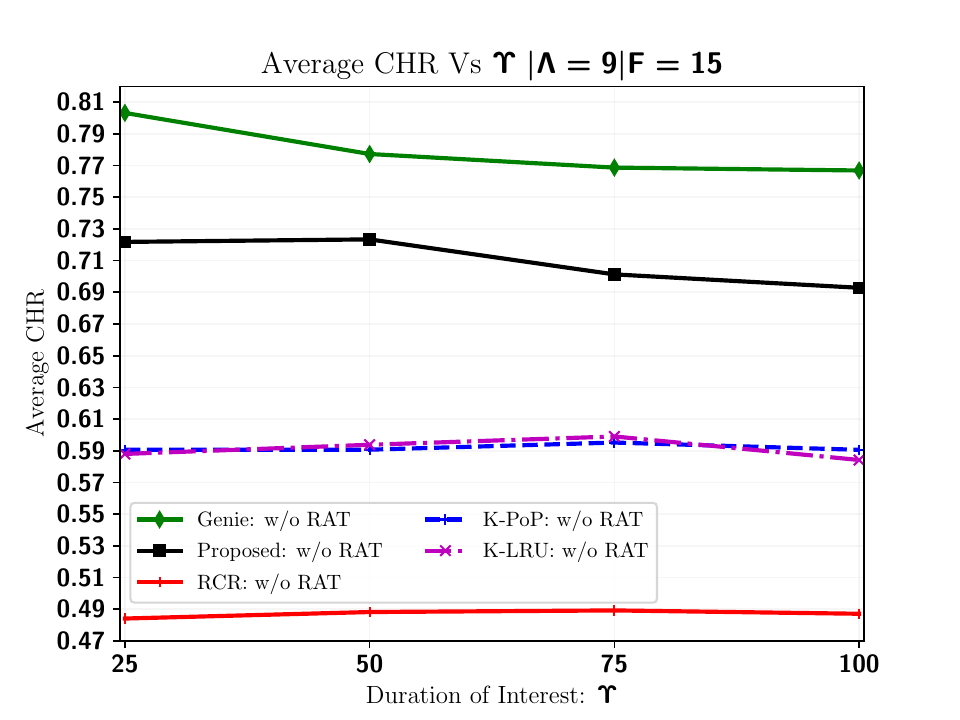} \vspace{-0.2in}
	\caption{CHR analysis for different DoI when $U=8$ (without RAT)}
	\label{DoIVsCHRWoRAT}
\end{minipage}
\end{figure*}
\begin{figure} \vspace{-0.2in}
	\centering
	\includegraphics[trim=10 5 40 10, clip, width=0.4\textwidth]{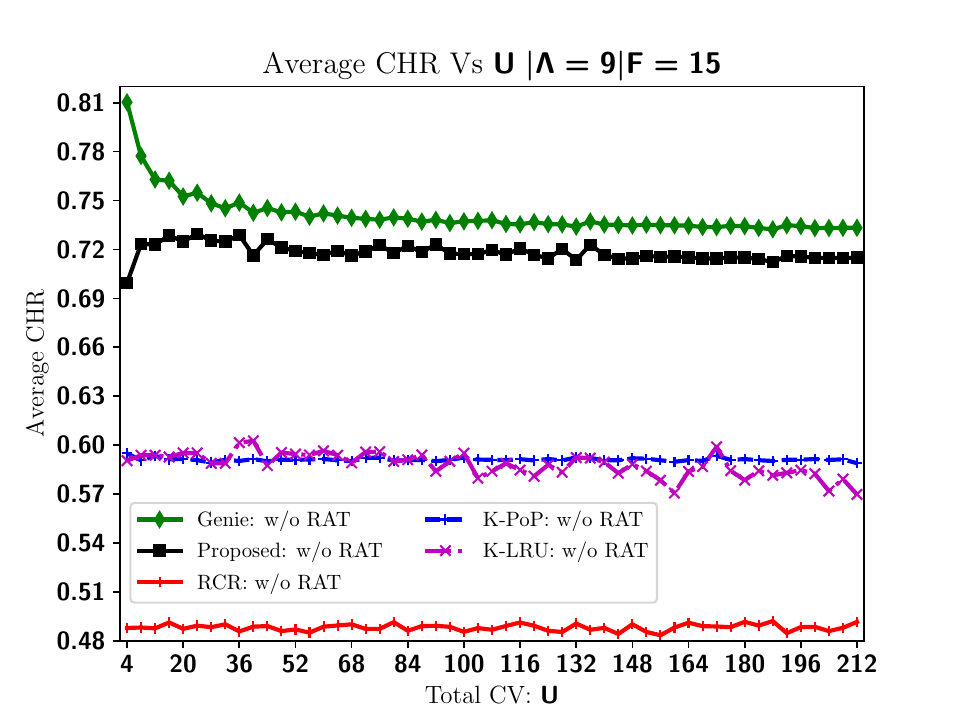} \vspace{-0.08in}
	\caption{CHR comparison with baselines for different CVs when $\Lambda=9$ (without RAT)}
	\label{totalCVvsCHRWoRAT}
\end{figure}

Fig. \ref{chrVsEpochWoRAT} shows the average CHR variation over $10$ test episode in $100$ simulation runs and corresponding standard deviations.
As expected, the performance of the proposed CPP is very close to the Genie-aided performance in these test runs.
Particularly, the proposed CPP delivers around $98\%$ of the Genie-aided performance.
Moreover, the other baselines' average CHRs largely deviate from the Genie-aided solution.
We observe that the proposed CPP is around $52\%$, $16\%$ and $14\%$ better than RCR, $K$-PoP and $K$-LRU, respectively, even when $\Lambda$ is $80\%$ of the content catalog $\mathcal{F}$, which validate the effectiveness of the proposed method.

To this end, we study the impact of different DoI $\Upsilon$ on the CHR. 
Recall that the DoI is the period for which the contents in the library remain fixed. 
A shorter DoI means that the content catalog can be refreshed quickly. 
Besides, based on our content request model, each CV's content choices change fewer times within this short interval.
Hence, the edge server can quickly accommodate the CPP to capture the future demands of the CVs.
This, thus, may yield better CHR.
On the other hand, when this period is extended, performance is expected to deteriorate slightly.
This is due to the fact that the cache storage cannot be replaced until this DoI period expires, while the CVs' requests vary in each slot.
We also observe similar trends in our simulation results.
Fig. \ref{DoIVsCHRWoRAT} shows CHR for different DoI, where we observe that even the Genie-aided performance degrades from $80\%$ to $76\%$ when the DoI is increased from $25\times \kappa$ to $100 \times \kappa$.
We also observe that our proposed CPP experiences only about $4\%$ performance degradation.
Moreover, the performance improvement of our proposed solution is about $49\%$, $22\%$ and $23\%$ at $\Upsilon=25 \times \kappa$ and about $42\%$, $17\%$ and $18\%$ at $\Upsilon=100 \times \kappa$, respectively, over the RCR, $K$-Pop and $K$-LRU baselines.
Note that we leave the choice of DoI as a design parameter chosen by the system administrator, which can be decided based on the practical hardware limitations and other associated overheads in the network. As such, we fix $\Upsilon=50\times \kappa$ for the rest of our analysis.

As content requests arrive following preference-popularity tradeoff, the CHR also gets affected by the total number of CVs in the VEN. 
Intuitively, as the CVs' preferences are heterogeneous, when the total number of CVs in the VEN increases, the content requests largely diversify. 
Therefore, even with the Genie-aided solution, the CHR may degrade when the number of CVs in the VEN increases. 
This is also reflected in our simulated results in Fig. \ref{totalCVvsCHRWoRAT}.
The performance of the proposed CPP algorithm is stable regardless of the number of CVs in the VEN. 
We observe a slight performance gap between the CPP and the Genie-aided solution.
This gap gets smaller and smaller as the total number of CVs in the VEN increases.
Particularly, we observe that the proposed CPP delivers an average $97 \%$ CHR for the considered CV numbers.
Besides, it delivers around $47 \%$, $21 \%$ and $22\%$ better performance than RCR, $K$-PoP and $K$-LRU, respectively.
Therefore, we will use this CPP $\pi_{ca}$ to find $\mathrm{I}_{f_c}(n)$ for all $n$ and show performance analysis of our proposed user-centric RAT solution.

To that end, we compare the performance of the proposed RAT solution with legacy network-centric RAT (NC-RAT).
In the NC-RAT, a base station (BS) is located at a fixed suitable location which has $Z=6$ pRBs and total transmission power of $46$ dBm. 
We use the same scheduling and deadline-based priority modeling for the NC-RAT as the proposed user-centric case.
Besides, we distributed the total transmission power proportionally to the scheduled CVs' priorities.
Moreover, we performed the same WSR maximization problem for getting the pRB allocation using Hungarian algorithm \cite{kuhn1955hungarian}.
In the following, this legacy RAT solution is termed NC-RAT and used with the cache placement baselines.
On the other hand, the `Proposed' method uses the proposed CPP and user-centric RAT solution.

Intuitively, with an increased $\Lambda$, the edge server can store more contents locally which increases the total number of local delivery by assuring lower cache miss events.
Therefore, with a proper RAT solution, the content delivery delay is expected to decrease if we increase the cache size of the edge server. 
We also observe this trend with both NC-RAT and our proposed user-centric RAT solution in Fig. \ref{delayCompRATs}.
However, note that NC-RAT is inflexible, and depending on the location of the CVs, NC-RAT may not even have expected radio-link qualities. 
This can, therefore, cause link failure and may increase the content delivery delay for the CVs' requested content. 
On the other hand, the proposed user-centric RAT solution can design the appropriate VC configuration, VC associations and proper radio resource allocation to deliver the content timely. 
Therefore, we expect the user-centric RAT solution to outperform the traditional NC-RAT.
Fig. \ref{delayCompRATs} shows the average content delivery delay $\mathrm{d}=\frac{1}{T}\sum_{t=1}^T \mathrm{\bar{d}}(t)$, where $\mathrm{\bar{d}}(t)$ is calculated in (\ref{Average_Delay}) with $\mathrm{W}_{\text{max}}=5$.
As we can see, the proposed solution outperforms the baselines.
Particularly, the average gain of the proposed solution on content delivery delay is around $15\%$ over the baselines.

The effectiveness of the proposed solution is more evident in Fig. \ref{deadlineViolationsRATs}, which shows the percentage of deadline violations in a test episode when the content size is $S=4$ KB.
As a general trend, the deadline violations decrease as $\Lambda$ increases. 
Besides, among the cache placement baselines, as we have seen in the performance comparisons of the CPP, even RCR delivers lower deadline violations than solely popularity-based $K$-PoP when the cache size is small.
Moreover, we observe around $28 \%$ higher deadline violations with the baseline NC-RAT over our proposed user-centric RAT solution.
Recall that this deadline violation is essentially the violation of constraint $C_{12}$, which means the requester CVs have not received the requested content by their required deadlines.
As such, these requester CVs may experience fatalities and degraded QoEs with the existing RAT and cache placement baselines.

Content size $S$ also affects the delivery delays and corresponding deadline violations.
Intuitively, content delivery delay shall increase if the payload increases when the network resources are unchanged. 
This also increases the likelihood of deadline violations.
Fig. \ref{dealyContentSizeRAT} shows how the delivery delay gets affected by content size $S$. 
Note that transmission delay is directly related to channel quality between the transmitter and receiver.
This channel uncertainty can cause fluctuations in the content delivery delay.
However, the general expectation is that the content delivery delay will increase when the payload size increases.
We also observe these in Fig. \ref{dealyContentSizeRAT}.
Particularly, when $S=2.5$ KB, the performance gain of the proposed solution is  around $30 \%$ over the RCR+NCRAT and around $27 \%$ over the $K$-PoP+NCRAT and $K$-LRU+NCRAT baselines.

Recall that delay cannot exceed the hard deadline. 
Therefore, higher content delivery delay leads to deadline violations. 
Fig. \ref{violationVsContentSize} shows how the payload size affects the deadline violations in the proposed VEN.
As expected, even when the payload size is small, we observe that the legacy NC-RAT solution cannot ensure guaranteed delivery within the deadline.
On the contrary, our proposed solution can ensure $0 \%$ deadline violations till $S=3$ KB. 
Moreover, when $S$ increases, the deadline violation percentage of our proposed solution performs significantly better than the NCRAT-based baselines.
For example, when $S=4$ KB, the deadline violation percentage with our proposed solution is around $12 \%$, whereas the NCRAT-based baselines have around $47\%$ deadline violations.
From Fig. \ref{delayCompRATs} - Fig. \ref{violationVsContentSize}, we can clearly see that the traditional NC-RAT is not sufficient to deliver the demands of the CVs.

To that end, we show the efficacy of the proposed RAT solution by considering all cache placement baselines accompanied by the proposed RAT solution for delivering the requested contents of the CVs.
Fig. \ref{delayUCRATCacheSize} shows how the content delivery delay gets affected by different cache sizes. 
Particularly, the proposed solution delivers requested contents around $14\%$, $7\%$ and $8\%$ faster than the RCR+Proposed-RAT, $K$-PoP+Proposed-RAT and $K$-LRU+Proposed-RAT, respectively, when $\Lambda=9$.
Recall that the proposed CPP (without RAT) had a performance gain of around $49\%$, $23\%$ and $24\%$ over the RCR, $K$-LRU and $K$-PoP, respectively. 
The proposed RAT can, thus, significantly compensate for the cache miss events.

Moreover, Fig. \ref{delayUCRATTotalCVs} shows delay vs total number of CVs $U$ in the VEN.
Intuitively, if $U$ increases, the edge server receives a larger number of content requests. 
Then, with the limited VCs, the edge server can at max schedule only $\mathrm{W}_{\text{max}}$ number of CVs. 
Therefore, $\mathrm{d}$ is expected to increase if $U$ increases, which is also reflected in Fig. \ref{delayUCRATTotalCVs}. 
Notice that in both Fig. \ref{delayUCRATCacheSize} and Fig. \ref{delayUCRATTotalCVs}, while the proposed solution outperforms the other cache placement baselines, the performance gaps are small because all cache placement baselines now use our proposed user-centric RAT solution for delivering the requested contents.

\begin{figure*} \vspace{-0.1in}
\begin{minipage}{0.33\textwidth}
	\centering
	\includegraphics[trim=12 5 40 10, clip, width=\textwidth]{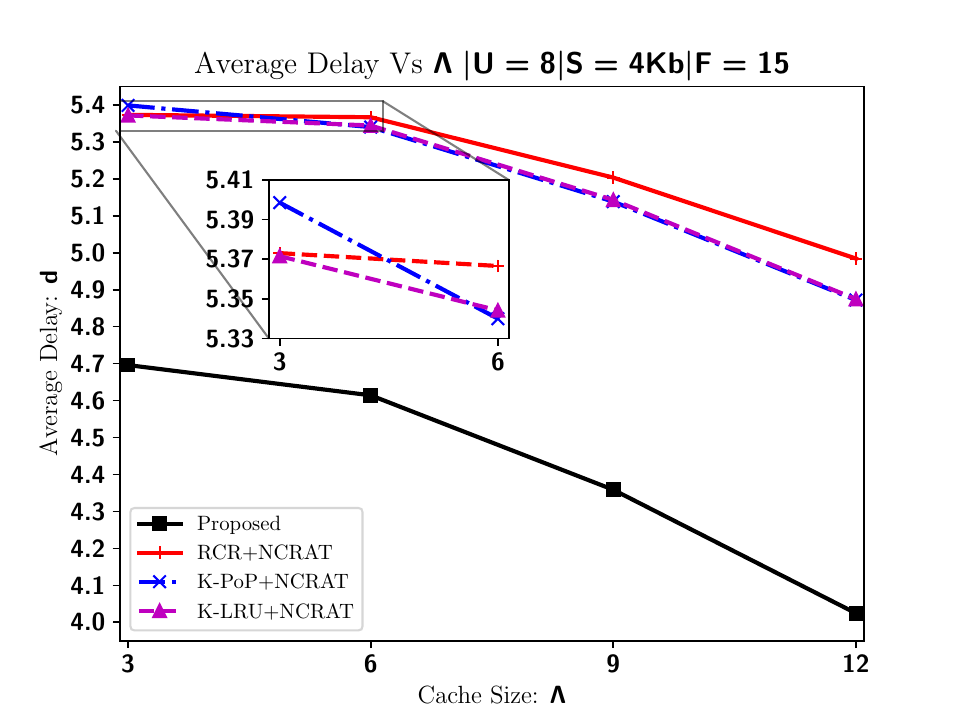} \vspace{-0.2in}
	\caption{Average content delivery delay comparison with NC-RAT and caching baselines}
	\label{delayCompRATs}
\end{minipage} \hspace{0.01in}
\begin{minipage}{0.33\textwidth}
	\centering
	\includegraphics[trim=12 5 40 10, clip, width=\textwidth]{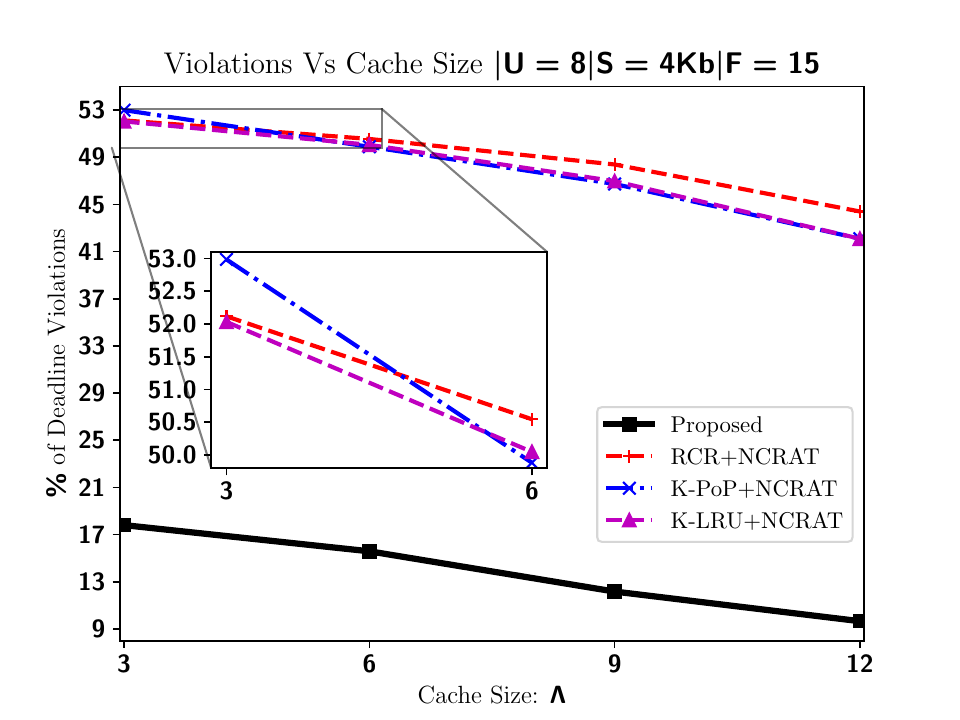} \vspace{-0.2in}
	\caption{Deadline violation percentage comparison with legacy NC-RAT}
	\label{deadlineViolationsRATs}
\end{minipage}  \hspace{0.01in}
\begin{minipage}{0.33\textwidth}
	\centering
	\includegraphics[trim=12 5 35 10, clip, width=\textwidth]{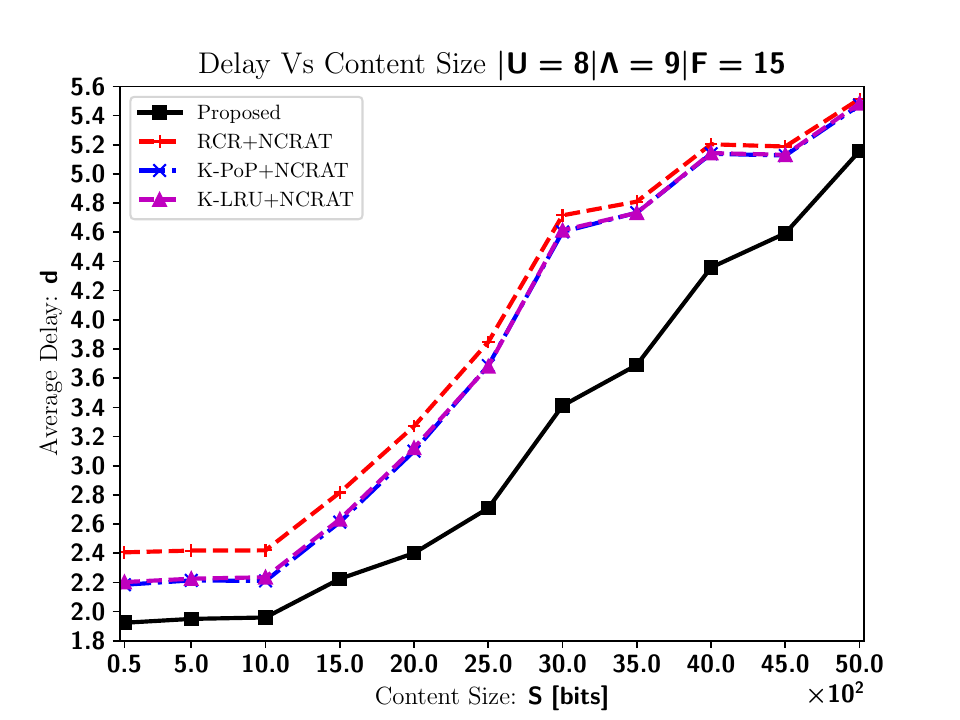} \vspace{-0.2in}
	\caption{Average content delivery delay comparisons for different content sze $S$}
	\label{dealyContentSizeRAT}
\end{minipage}
\begin{minipage}{0.33\textwidth}
	\centering
	\includegraphics[trim=12 5 35 10, clip, width=\textwidth]{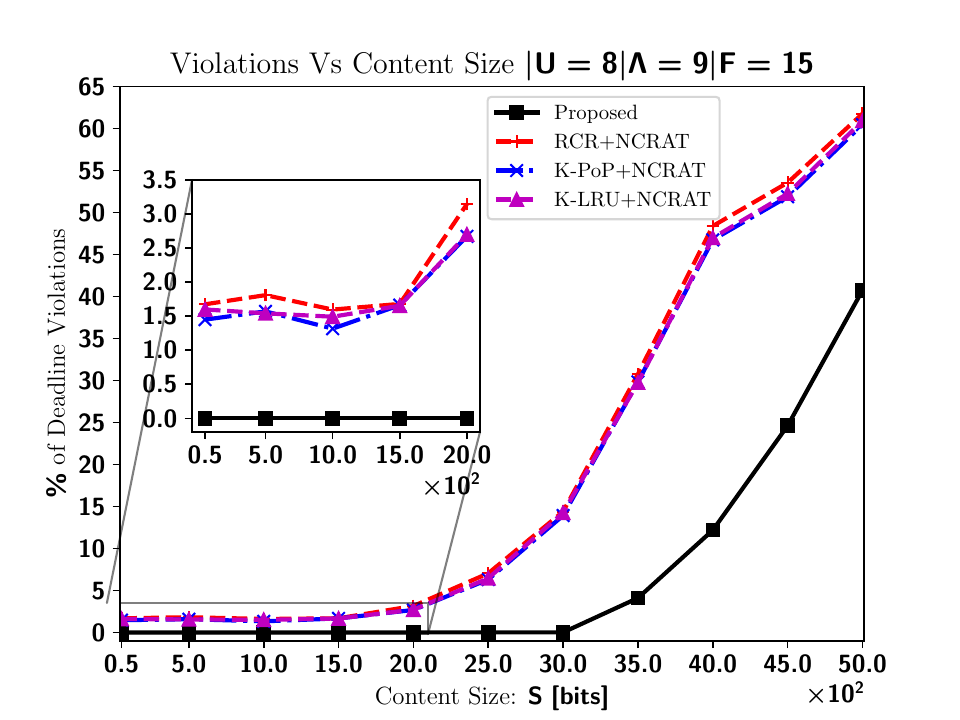} \vspace{-0.2in}
	\caption{Deadline violation percentage comparisons for different content size $S$} 
	\label{violationVsContentSize}
\end{minipage} \hspace{0.01in}
\begin{minipage}{0.33\textwidth}
	\centering
	\includegraphics[trim=12 8 40 10, clip, width=\textwidth]{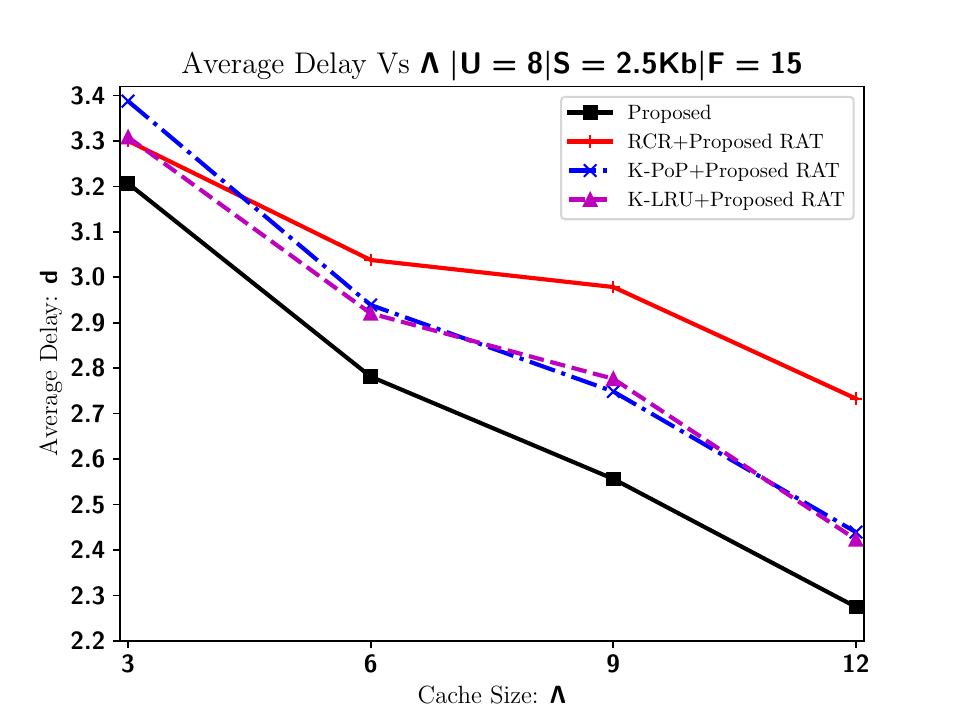} \vspace{-0.2in}
	\caption{Average content delivery delay for different $\Lambda$} 
	\label{delayUCRATCacheSize}
\end{minipage} \hspace{0.01 in}
\begin{minipage}{0.33\textwidth}
	\centering
	\includegraphics[trim=12 5 40 10, clip, width=\textwidth]{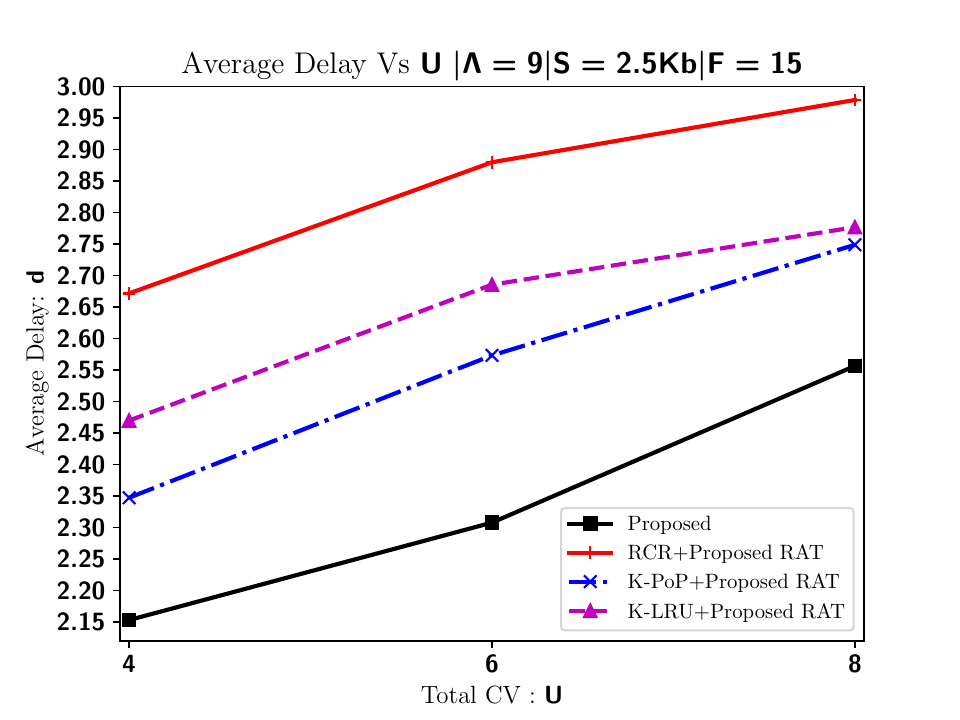} \vspace{-0.2in}
	\caption{Average delay for different number of CVs}
	\label{delayUCRATTotalCVs} 
\end{minipage} 
\end{figure*}

Finally, our extensive simulation results suggest that the CHR from our proposed CPP is very close to the genie-aided solution, while the baseline RCR, $K$-Pop and $K$-LRU cache placements yield poor CHRs. 
Besides, when we use the proposed CPP with our user-centric RAT solution, the performance improvements, in terms of deadline violation percentage and content delivery delay, are significant compared to existing legacy NC-RAT with the above cache placement baseline solutions.
Additionally, having a larger cache storage size increases the CHR, while having more CVs in the VEN leads to a slightly degraded CHR for all cache placement strategies.
Moreover, with fixed limited radio resources, content delivery delays grow, which increases the deadline violation percentage.

\vspace{-0.1in}
\section{Conclusion}
\label{conclusion}
\noindent
Considering the higher automation demand on the road, in this paper, we propose a user-centric RAT solution for delivering the CVs requested content with a learning solution for the cache placement.
From the results and analysis, we can conclude that existing cache placement baselines may not be sufficient to capture the heterogeneous demands and preferences of the CVs.
Moreover, the existing NC-RAT may cause severe fatalities on the road as it yields frequent deadline violations. 
Even for continuous deadline-constrained demand arrivals in each time slot, the proposed software-defined user-centric RAT solution has shown significant potential for offloading the payloads timely.
The results suggest that our proposed cache placement policy delivers practical near-optimal cache hit ratio while the proposed user-centric RAT efficiently delivers the requested contents within the allowable deadline.

\vspace{-0.1in}

\appendix
\vspace{-0.05in}
\subsection{Proof of Proposition  \ref{Prop_Chernoff_Bound}}
\label{Appendix_A}
\noindent
Assuming $\iota>0$, we write the following:
\begin{subequations}
\label{Proof_Chernoff_Bound}
\begin{align}
	& \mathrm{Pr}\left\{\Psi_t \geq \xi \right\}  = \mathrm{Pr} \big\{e^{\iota\Psi_t} \geq e^{\iota\xi}\big\} 
	\stackrel{(a)}{\leq} (\mathbb{E}\big[e^{\iota\Psi_t} \big])/e^{\iota\xi}, \\
	& \stackrel{(b)}{=} e^{-\iota\xi} \prod\nolimits_{u=1}^U \mathbb{E}\left[e^{\iota \Theta_u^t} \right]  
	\stackrel{(c)}{=} e^{-\iota\xi} \prod\nolimits_{u=1}^U \left(1 - p_u + p_u e^\iota\right), \\
	& \stackrel{(d)}{\leq} e^{-\iota\xi} \left[ \frac{\sum_{u=1}^U \left(1 - p_u + p_u e^\iota\right)}{U}\right]^U \eqz \eqz \eqz  
	= e^{-\iota\xi} \left[1 - \bar{p} + \bar{p} e^{\iota} \right]^U\eqz,\eqz \\
	&= \exp\left[-\iota \xi  + U \ln\left(1-\bar{p}+\bar{p}e^\iota\right) \right], 
\end{align}	
\end{subequations}
where $\left(a\right)$ follows Markov inequality, $\left(b\right)$ is true as $\Theta_u^t$s are independent and identically distributed, $\left(c\right)$ follows as $\mathbb{E}\big[e^{\iota \Theta_u^t}\big]$ is the moment generating function of $\Theta_u^t$, and $\left(d\right)$ is obtained following the inequality of arithmetic and geometric means.

To this end, we find $e^\iota = \frac{\xi \left(1-\bar{p}\right)}{\bar{p}\left(U-\xi\right)}$ that minimizes (\ref{Proof_Chernoff_Bound}).
Plugging this value in (\ref{Proof_Chernoff_Bound}), we obtain the bound as 
\begin{subequations}
\label{eq40}
\begin{align}
	&\mathrm{Pr}\left\{\Psi_t \geq \xi \right\} 
    \stackrel{(a)}{\leq}
    \exp \eqz \left[U \left\{\eqz \ln \eqz \left(\eqz \frac{\left(1-\bar{p}\right)}{1-\chi}
	\eqz\right)\eqz - \chi \ln \left(\frac{\chi \left(1-\bar{p}\right)}{\bar{p}\left(1-\chi\right)} \right) \eqz \right\} \! \eqz \right], \nonumber\\
    & \quad= \exp\left[-U D_{\bar{p}} \left(\chi\right)\right], \tag{\ref{eq40}}
\end{align}
\end{subequations}
where $\chi = \frac{\xi}{U}$ in $(a)$ and $D_{\bar{p}} = \chi \ln \left(\frac{\chi}{\bar{p}} \right) + \left(1 - \chi \right) \ln \left( \frac{1-\chi}{1-\bar{p}}\right)$.

\vspace{-0.1in}
\subsection{Proof of Theorem  \ref{Thm_CHR_NP_Hardness}}
\label{Appendix_CHR_NP_Hardness}
\noindent
We show that an instance of our problem in (\ref{Cache_Hit_Max_Problem}) reduces to an instance of a well-known NP-hard problem.
Particularly, we only consider a single cache placement step $t=n\Upsilon$ and assume that $\mathrm{I}_u^{f_c}(t)$s, $\forall t \in [n\Upsilon, (n+1)\Upsilon]$ are known at the edge server beforehand\footnote{This assumption is only for the sake of this proof. The edge server does not know the future.}.
Then, we re-write our (\ref{Cache_Hit_Max_Problem}) instance as 
\begin{subequations}
\label{CHR_Instance}
\begin{align}
	\tag{\ref{CHR_Instance}} & \underset{ \mathrm{I}_{f_c}(n) ;~ \forall f_c \in \mathcal{F}} {\text{maximize}} ~~ \sum\nolimits_{t \in [n\Upsilon, (n+1)\Upsilon]}  \mathrm{CHR}(t), \\
	\label{CHR_Instance_Cons_1} & \sum\nolimits_{c=1}^C \sum\nolimits_{f_c \in \mathcal{F}_c} S \cdot  \mathrm{I}_{f_c}(n) \leq \Lambda, \sum\nolimits_{f_c \in \mathcal{F}_c} S \cdot \mathrm{I}_{f_c}(n) = \Lambda^c, \\
	\label{CHR_Instance_Cons_3} &  \mathrm{I}_{f_c}(n) \in \{0,1\}, \forall c=1,\dots, C; f_c \in \mathcal{F}_c,
\end{align}
\end{subequations}
where the constraints are taken for the same reasons as in (\ref{Cache_Hit_Max_Problem}).

To that end, if $\Lambda^c = S \cdot 1$, we could rewrite the second constraint as $\sum_{f_c \in \mathcal{F}_c} \mathrm{I}_{f_c}(n) = 1$.
Then, it is easy to recognize that an instance of the well-known multiple-choice knapsack problem (MCKP) \cite{kellerer2004multidimensional} has reduced to this instance of our CHR maximization problem.
As MCKP is a well-known NP-hard problem \cite{kellerer2004multidimensional}, we conclude that the cache placement problem for each $t=n\Upsilon$ is NP-hard even when the to-be requested contents are known beforehand. 
As such, the long-term policy optimization problem in (\ref{Cache_Hit_Max_Problem}) is NP-hard.

\vspace{-0.1in}
\bibliography{Reference.bib}

\begin{thebibliography}{10}
\providecommand{\url}[1]{#1}
\csname url@samestyle\endcsname
\providecommand{\newblock}{\relax}
\providecommand{\bibinfo}[2]{#2}
\providecommand{\BIBentrySTDinterwordspacing}{\spaceskip=0pt\relax}
\providecommand{\BIBentryALTinterwordstretchfactor}{4}
\providecommand{\BIBentryALTinterwordspacing}{\spaceskip=\fontdimen2\font plus
\BIBentryALTinterwordstretchfactor\fontdimen3\font minus
  \fontdimen4\font\relax}
\providecommand{\BIBforeignlanguage}[2]{{%
\expandafter\ifx\csname l@#1\endcsname\relax
\typeout{** WARNING: IEEEtran.bst: No hyphenation pattern has been}%
\typeout{** loaded for the language `#1'. Using the pattern for}%
\typeout{** the default language instead.}%
\else
\language=\csname l@#1\endcsname
\fi
#2}}
\providecommand{\BIBdecl}{\relax}
\BIBdecl

\bibitem{NHTSA_1}
\BIBentryALTinterwordspacing
``{Automated Vehicles for Safety},'' united States Department of
  Transportation, National Highway Traffic Safety Administration, {Accessed:}
  \today. [Online]. Available:
  \url{https://www.nhtsa.gov/technology-innovation/automated-vehicles-safety}
\BIBentrySTDinterwordspacing

\bibitem{DfT_UK}
\BIBentryALTinterwordspacing
``{Connected and Automated Vehicles in the UK: 2020 information booklet},''
  centre for Connected and Autonomous Vehicles, Department for Transportation,
  UK, {Accessed:} \today. [Online]. Available:
  \url{https://www.gov.uk/government/publications/connected-and-automated-vehicles-in-the-uk-2020-information-booklet}
\BIBentrySTDinterwordspacing

\bibitem{li2021federated}
X.~Li, L.~Cheng, C.~Sun, K.-Y. Lam, X.~Wang, and F.~Li,
  ``Federated-learning-empowered collaborative data sharing for vehicular edge
  networks,'' \emph{IEEE Network}, vol.~35, no.~3, pp. 116--124, 2021.

\bibitem{liu20206g}
M.~Noor-A-Rahim \emph{et~al.}, ``6g for vehicle-to-everything (v2x)
  communications: Enabling technologies, challenges, and opportunities,''
  \emph{Proc. IEEE}, vol. 110, no.~6, pp. 712--734, June 2022.

\bibitem{8734737}
Y.~F. {Payalan} and M.~A. {Guvensan}, ``Towards next-generation vehicles
  featuring the vehicle intelligence,'' \emph{IEEE Trans. Intelligent
  Transport. Syst.}, vol.~21, no.~1, pp. 30--47, June 2020.

\bibitem{8515151}
J.~Liu and J.~Liu, ``Intelligent and connected vehicles: Current situation,
  future directions, and challenges,'' \emph{IEEE Commun. Stand. Mag.}, vol.~2,
  no.~3, pp. 59--65, Sept. 2018.

\bibitem{chandupatla2020augmented}
S.~Chandupatla, N.~Yerram, and B.~Achuthan, ``Augmented reality projection for
  driver assistance in autonomous vehicles,'' SAE Technical Paper, Tech. Rep.,
  2020.

\bibitem{Pervej_UPLEdge}
{M. F. {Pervej}}, {L. T. {Tan}}, and {R. Q. {Hu}}, ``User preference learning
  aided collaborative edge caching for small cell networks,'' in \emph{Proc.
  IEEE Globecom}, Dec. 2020.

\bibitem{8367785}
I.~{Parvez}, A.~{Rahmati}, I.~{Guvenc}, A.~I. {Sarwat}, and H.~{Dai}, ``{A
  Survey on Low Latency Towards 5G: RAN, Core Network and Caching Solutions},''
  \emph{IEEE Commun. Surveys Tutor.}, vol.~20, no.~4, pp. 3098--3130, May 2018.

\bibitem{Pervej_AIACEC}
M.~F. Pervej, L.~T. Tan, and R.~Q. Hu, ``Artificial intelligence assisted
  collaborative edge caching in small cell networks,'' in \emph{Proc. IEEE
  Globecom}, Dec. 2020.

\bibitem{9552606}
L.~Zhao, H.~Li, N.~Lin, M.~Lin, C.~Fan, and J.~Shi, ``Intelligent content
  caching strategy in autonomous driving toward 6g,'' \emph{IEEE Trans. Intell.
  Transport. Syst.}, pp. 1--11, Sept. 2021.

\bibitem{9382020}
W.~Qi, Q.~Li, Q.~Song, L.~Guo, and A.~Jamalipour, ``Extensive edge intelligence
  for future vehicular networks in 6g,'' \emph{IEEE Wireless Commun.}, vol.~28,
  no.~4, pp. 128--135, Mar. 2021.

\bibitem{9019853}
P.~Wu, L.~Ding, Y.~Wang, L.~Li, H.~Zheng, and J.~Zhang, ``Performance analysis
  for uplink transmission in user-centric ultra-dense v2i networks,''
  \emph{IEEE Trans. Vehicular Technol.}, vol.~69, no.~9, pp. 9342--9355, Mar.
  2020.

\bibitem{Pervej_throughput}
M.~F. Pervej and S.-C. Lin, ``Dynamic power allocation and virtual cell
  formation for {Throughput-Optimal} vehicular edge networks in highway
  transportation,'' in \emph{Proc. IEEE ICC Workshops}, June 2020.

\bibitem{zhou2021user}
Y.~Zhou and Y.~Zhang, ``User-centric data communication service strategy for 5g
  vehicular networks,'' \emph{IET Com.}, Jul. 2021.

\bibitem{9462895}
Z.~Cheng, D.~Zhu, Y.~Zhao, and C.~Sun, ``Flexible virtual cell design for
  ultradense networks: A machine learning approach,'' \emph{IEEE Access},
  vol.~9, pp. 91\,575--91\,583, June 2021.

\bibitem{9134799}
H.~Xiao, X.~Zhang, A.~T. Chronopoulos, Z.~Zhang, H.~Liu, and S.~Ouyang,
  ``Resource management for multi-user-centric v2x communication in dynamic
  virtual-cell-based ultra-dense networks,'' \emph{IEEE Trans. Commun.},
  vol.~68, no.~10, pp. 6346--6358, Oct. 2020.

\bibitem{8334916}
T.~Sahin, M.~Klugel, C.~Zhou, and W.~Kellerer, ``Virtual cells for 5g v2x
  communications,'' \emph{IEEE Commun. Standards Mag.}, vol.~2, no.~1, pp.
  22--28, Mar. 2018.

\bibitem{8088603}
T.~Şahin, M.~Klügel, C.~Zhou, and W.~Kellerer, ``Multi-user-centric virtual
  cell operation for v2x communications in 5g networks,'' in \emph{Proc. CSCN},
  Sept. 2017.

\bibitem{Pervej_EE}
{M. F. {Pervej}} and {S.-C. {Lin}}, ``{Eco-Vehicular} edge networks for
  connected transportation: A distributed multi-agent reinforcement learning
  approach,'' in \emph{Proc. IEEE VTC2020-Fall}, Oct. 2020.

\bibitem{6994333}
D.~Kreutz, F.~M.~V. Ramos, P.~E. Veríssimo, C.~E. Rothenberg, S.~Azodolmolky,
  and S.~Uhlig, ``Software-defined networking: A comprehensive survey,''
  \emph{Proc. IEEE}, vol. 103, no.~1, pp. 14--76, Dec. 2015.

\bibitem{huang2021delay}
X.~Huang, K.~Xu, Q.~Chen, and J.~Zhang, ``Delay-aware caching in internet of
  vehicles networks,'' \emph{IEEE Internet Things J.}, 2021.

\bibitem{nan2021delay}
Z.~Nan, Y.~Jia, Z.~Ren, Z.~Chen, and L.~Liang, ``Delay-aware content delivery
  with deep reinforcement learning in internet of vehicles,'' \emph{IEEE Trans.
  Intel. Transport. Syst.}, 2021.

\bibitem{8993754}
S.~{Fang}, H.~{Chen}, Z.~{Khan}, and P.~{Fan}, ``On the content delivery
  efficiency of noma assisted vehicular communication networks with delay
  constraints,'' \emph{IEEE Wireless Commun. Letters}, pp. 1--1, Feb 2020.

\bibitem{8998330}
Y.~{Dai}, D.~{Xu}, K.~{Zhang}, S.~{Maharjan}, and Y.~{Zhang}, ``Deep
  reinforcement learning and permissioned blockchain for content caching in
  vehicular edge computing and networks,'' \emph{IEEE Trans. Vehicular
  Technol.}, vol.~69, no.~4, pp. 4312--4324, Feb. 2020.

\bibitem{8998397}
Y.~{Lu}, X.~{Huang}, K.~{Zhang}, S.~{Maharjan}, and Y.~{Zhang}, ``Blockchain
  empowered asynchronous federated learning for secure data sharing in internet
  of vehicles,'' \emph{IEEE Trans. Vehicular Technol.}, vol.~69, no.~4, pp.
  4298--4311, Feb. 2020.

\bibitem{zhang2020smart}
Z.~Zhang, C.-H. Lung, M.~St-Hilaire, and I.~Lambadaris, ``Smart proactive
  caching: Empower the video delivery for autonomous vehicles in icn-based
  networks,'' \emph{IEEE Trans. Vehicular Technol.}, May 2020.

\bibitem{9129007}
S.~{Fang}, Z.~{Khan}, and P.~{Fan}, ``A cooperative rsu caching policy for
  vehicular content delivery networks in two-way road with a t-junction,'' in
  \emph{Proc. VTC2020-Spring}, June 2020.

\bibitem{9417383}
W.~Liu, H.~Zhang, H.~Ding, D.~Li, and D.~Yuan, ``Mobility-aware coded edge
  caching in vehicular networks with dynamic content popularity,'' in
  \emph{Proc. IEEE WCNC}, 2021.

\bibitem{8531745}
Y.~{Jiang}, M.~{Ma}, M.~{Bennis}, F.~{Zheng}, and X.~{You}, ``User preference
  learning-based edge caching for fog radio access network,'' \emph{IEEE Trans.
  Commun.}, vol.~67, no.~2, pp. 1268--1283, Nov. 2018.

\bibitem{malik2020personalized}
A.~Malik, J.~Kim, K.~S. Kim, and W.-Y. Shin, ``A personalized preference
  learning framework for caching in mobile networks,'' \emph{IEEE Trans. Mobile
  Computing}, Feb. 2020.

\bibitem{9275345}
Y.~Lin, Z.~Zhang, Y.~Huang, J.~Li, F.~Shu, and L.~Hanzo, ``Heterogeneous
  user-centric cluster migration improves the connectivity-handover trade-off
  in vehicular networks,'' \emph{IEEE Trans. Vehicular Technol.}, vol.~69,
  no.~12, pp. 16\,027--16\,043, Dec. 2020.

\bibitem{lin2021sd}
S.-C. Lin, K.-C. Chen, and A.~Karimoddini, ``Sdvec: Software-defined vehicular
  edge computing with ultra-low latency,'' \emph{IEEE Commun. Magaz.}, vol.~59,
  no.~12, pp. 66--72, 2021.

\bibitem{3GPP_TR_38_886}
``\textit{3rd Generation Partnership Project; Technical Specification Group
  Radio Access Network; V2X Services based on NR; User Equipment (UE) radio
  transmission and reception},'' 3GPP TR 38.886 V0.5.0, Release 16, Feb. 2020.

\bibitem{graham1989concrete}
R.~L. Graham, D.~E. Knuth, O.~Patashnik, and S.~Liu, ``Concrete mathematics: a
  foundation for computer science,'' \emph{Computers in Physics}, vol.~3,
  no.~5, pp. 106--107, 1989.

\bibitem{3GPP_TR_38_901}
``3rd generation partnership project; technical specification group radio
  access network; study on channel model for frequencies from 0.5 to 100 ghz,''
  3GPP TR 38.901 V16.1.0, Release 16, Dec. 2019.

\bibitem{8374824}
S.~O. {Somuyiwa}, A.~{György}, and D.~{Gündüz}, ``A reinforcement-learning
  approach to proactive caching in wireless networks,'' \emph{IEEE J. Sel.
  Areas Commun.}, vol.~36, no.~6, pp. 1331--1344, 2018.

\bibitem{8357917}
B.~N. {Bharath}, K.~G. {Nagananda}, D.~{Gündüz}, and H.~V. {Poor}, ``Caching
  with time-varying popularity profiles: A learning-theoretic perspective,''
  \emph{IEEE Trans. Commun.}, vol.~66, no.~9, pp. 3837--3847, 2018.

\bibitem{wang1993number}
Y.~H. Wang, ``On the number of successes in independent trials,''
  \emph{Statistica Sinica}, pp. 295--312, 1993.

\bibitem{sutton2018reinforcement}
R.~S. Sutton and A.~G. Barto, \emph{Reinforcement learning: An
  introduction}.\hskip 1em plus 0.5em minus 0.4em\relax MIT press, 2018.

\bibitem{watkins1992q}
C.~J. Watkins and P.~Dayan, ``Q-learning,'' \emph{Machine learning}, vol.~8,
  no. 3-4, pp. 279--292, 1992.

\bibitem{mnih2015human}
V.~Mnih \emph{et~al.}, ``Human-level control through deep reinforcement
  learning,'' \emph{nature}, vol. 518, no. 7540, pp. 529--533, 2015.

\bibitem{8624268}
T.~Guo and A.~Suárez, ``Enabling 5g ran slicing with edf slice scheduling,''
  \emph{IEEE Trans. Vehicular Technol.}, vol.~68, no.~3, pp. 2865--2877, Mar.
  2019.

\bibitem{9348504}
P.~Guan and X.~Deng, ``Maximize potential reserved task scheduling for urllc
  transmission and edge computing,'' in \emph{Proc. VTC2020-Fall}, 2020.

\bibitem{buttazzo2011hard}
G.~C. Buttazzo, \emph{Hard real-time computing systems: predictable scheduling
  algorithms and applications}.\hskip 1em plus 0.5em minus 0.4em\relax Springer
  Science \& Business Media, 2011, vol.~24.

\bibitem{kuhn1955hungarian}
H.~W. Kuhn, ``The hungarian method for the assignment problem,'' \emph{Naval
  research logistics quarterly}, vol.~2, no. 1-2, pp. 83--97, 1955.

\bibitem{west2001introduction}
D.~B. West \emph{et~al.}, \emph{Introduction to graph theory}.\hskip 1em plus
  0.5em minus 0.4em\relax Prentice hall Upper Saddle River, 2001, vol.~2.

\bibitem{8569938}
P.~A. {Lopez}, M.~{Behrisch}, L.~{Bieker-Walz}, J.~{Erdmann}, Y.~{Flötteröd},
  R.~{Hilbrich}, L.~{Lücken}, J.~{Rummel}, P.~{Wagner}, and E.~{Wiessner},
  ``Microscopic traffic simulation using sumo,'' in \emph{Proc. ITSC}, Nov.
  2018.

\bibitem{roy2011handbook}
R.~R. Roy, \emph{Handbook of mobile ad hoc networks for mobility models}.\hskip
  1em plus 0.5em minus 0.4em\relax Springer, 2011, vol. 170.

\bibitem{wegener2008traci}
A.~Wegener, M.~Pi{\'o}rkowski, M.~Raya, H.~Hellbr{\"u}ck, S.~Fischer, and J.-P.
  Hubaux, ``Traci: an interface for coupling road traffic and network
  simulators,'' in \emph{Proc. CNSS}, 2008.

\bibitem{8809280}
Y.~Zhang, R.~Wang, M.~S. Hossain, M.~F. Alhamid, and M.~Guizani,
  ``Heterogeneous information network-based content caching in the internet of
  vehicles,'' \emph{IEEE Trans. Vehicular Technol.}, vol.~68, no.~10, pp.
  10\,216--10\,226, Aug. 2019.

\bibitem{cacheRepSurvey}
S.~Podlipnig and L.~B\"{o}sz\"{o}rmenyi, ``A survey of web cache replacement
  strategies,'' \emph{ACM Comput. Surv.}, vol.~35, no.~4, p. 374–398, dec
  2003.

\bibitem{kellerer2004multidimensional}
H.~Kellerer, U.~Pferschy, and D.~Pisinger, ``Multidimensional knapsack
  problems,'' in \emph{Knapsack problems}.\hskip 1em plus 0.5em minus
  0.4em\relax Springer, 2004, pp. 235--283.

\end{thebibliography}
\bibliographystyle{IEEEtran}

\end{document}